\newtheorem{lemma}{Lemma}
\newcommand{\dd}{\mathrm{d}}
\newcommand{\tsfrac}[2]{{\textstyle \frac{#1}{#2}}}
\newcommand{\nrm}  [1] {\Vert #1 \Vert}
\newcommand{\trans}{^\text{T}}
\newcommand{\Normal}{{\operatorname{N}}}
\newcommand{\E}{{\operatorname{E}}}
\begin{document}

\title{Fast sampling in  a linear-Gaussian inverse problem}

\author{Colin Fox and Richard A. Norton}

\maketitle
\begin{abstract}
We solve the inverse problem of deblurring a pixelized image of Jupiter using regularized deconvolution and by sample-based Bayesian inference. By efficiently  sampling the marginal posterior distribution for hyperparameters, then the full conditional for the deblurred image, we find that we can evaluate the posterior mean \emph{faster} than regularized inversion, when selection of the regularizing parameter is considered. To our knowledge, this is the first demonstration of sampling and inference that takes less compute time than regularized inversion in an inverse problems. Comparison to random-walk Metropolis-Hastings and block Gibbs MCMC shows that marginal then conditional sampling also outperforms these more common sampling algorithms, having better scaling with problem size. When problem-specific computations are feasible the asymptotic cost of an independent sample is one linear solve, implying that sample-based Bayesian inference may be performed directly over function spaces, when that limit exists. 
\end{abstract}


\section{Introduction}
We consider solving a problem in image deblurring using the two frameworks of regularized inversion and sample-based Bayesian inference. 
The computational cost of a standard efficient implementation of regularized inversion is taken as a benchmark against which we compare the cost of drawing samples from the associated Bayesian posterior distribution using algorithms designed for hierarchical Bayesian models, including the \emph{marginal then conditional} sampler. A detailed comparison of algorithms is presented in an example where simplifying assumptions allow all matrices to be diagonalized using the Fourier transform. We also compute the posterior mean image in a second example with improved prior modeling and without simplifying assumptions to demonstrate that the marginal then conditional sampler is feasible, and outperforms regularized  inversion, in the  general setting.

We choose image deblurring because it is a canonical linear inverse problem, when using a low-level representation for the unknown image\footnote{A classification of Bayesian image representations and prior models as low-level, intermediate-level, and high-level is given in~\cite{HurnHusbyRueHSSS}.}. In the idealized setting where data and `true' image are functions, and under the usual assumption that the blurring process is linear and shift-invariant, the forward map corresponds to convolution of the true image with a fixed point-spread function, hence is a Fredholm integral of the first kind. When the point-spread function is square integrable the forward map is Hilbert-Schmidt, hence compact, implying that the inverse problem is ill-posed~\cite{Young,Vogel}.  The discrete problem also displays these properties, having an ill-conditioned forward map, and is referred to as a \emph{discrete ill-posed problem}~\cite{Hansen:1999}.

The use of a low-level representation, in this case a gray-scale pixel image, allows image space to be identified with $\mathbb{R}^n$ and given a normed linear structure so the inverse problem is amenable to regularized inversion. Low-level representations also occur in \emph{exploratory} Bayesian analyses, though more developed applications typically gain significantly through problem specific intermediate- or high-level models (see e.g. \cite{WatzenigFox, HurnHusbyRueHSSS, GrenanderMiller2007}). Low-level image representations effectively restrict the prior information that can be imposed to correlations between pixels values within neighborhoods in the image. We follow Vogel~\cite{Vogel} by using the 2-norm of the graph Laplacian as the regularizing functional, to impose smoothness, and follow Bardsley~\cite{BardsleyRTO} by also using this semi-norm as the negative log prior distribution (up to an additive constant). 
In our experience, regularized inversion and the posterior mean over a low-level image model produce solutions of similar quality. Accordingly, we are primarily interested in computational cost, and only to a lesser degree with quality of reconstructions or measures of uncertainty.

Regularization has the traditional advantage of being implemented using mature and computationally efficient steps, particularly for deblurring by Fourier deconvolution as presented in our first computed example.  On the other hand, sample-based Bayesian inference has the advantage of producing unbiased estimates of the unknown `true' image, or properties of that image, even when evaluated using a \emph{single} posterior sample, while multiple samples allow evaluation of valid uncertainties. Further, Bayesian inference naturally includes estimation of, or averaging over, the effective regularization parameter whereas regularization methods require an extra procedure for determining the regularizing parameter~\cite{Hansen:1999,Vogel}.

The Bayesian formulation is naturally stated as a hierarchical stochastic model that relates measured data, unknown image, and hyperparameters. Typical sample-based methods for exploring the Bayesian posterior distribution utilize random-walk MCMC over the posterior distribution, which can be very slow, especially when compared to efficient computation of a regularized solution.

However, as we show here, the stochastic model may be factorized by marginalizing over the unknown image, and we need only run a general MCMC for the low-dimensional marginal distribution over hyperparameters; our main contribution is to observe that the ratio of high-dimensional determinants required for that low-dimensional MCMC can be computed cheaply, which we demonstrate theoretically and in the computed examples. Drawing an \emph{independent} sample from the posterior distribution is then dominated by the same linear solve required in regularized deconvolution. Since selection of the regularizing parameter requires many such linear solves, the computational cost of the regularized solution is equivalent to drawing many independent posterior samples, from which robust estimates may be evaluated along with quantified uncertainties. Thus, in a setting where regularized inversion is often applied as an efficient solution method, we demonstrate that Bayesian inference over an equivalent model is actually cheaper. The marginal then conditional sampler is also cheaper than the block Gibbs sampler that has recently been presented as an efficient sampler for the linear-Gaussian inverse problem.  Further, because the MCMC we implement over hyperparameters can be made independent of image size, that MCMC can be performed directly on the infinite dimensional image model, when that limit is well defined, with image-size dependence only occurring in the setup phase and in the final image-forming step.

The paper is structured as follows: In Section~\ref{sec:deconv} we present a problem in semi-blind deconvolution, and in Sections~\ref{sec:reg} and \ref{sec:Bayes} we present formulations for its solution via regularized inversion and Bayesian inference, respectively. Subsection~\ref{sec:mtc} introduces the marginal then conditional sampler that allows efficient computation. Section~\ref{sec:comparative} compares sampling algorithms for  linear-Gaussian inverse problems, including the block Gibbs sampler,  the one-block algorithm, and the marginal then conditional sampler. Numerical comparison of all algorithms is presented in Section~\ref{sec:periodic} to validate theoretical results, in the convenient setting where efficient computation is available via the FFT. Numerical implementation of fast sampling for a more general image model, including nuisance pixels and not assuming periodic boundary conditions, is presented in Section~\ref{sec:nonperiodic}.  Section~\ref{sec:discussion} presents a discussion of results and implications for the infinite dimensional limit. Technical calculations that we use for efficient marginal sampling in Section~\ref{sec:periodic} are presented in the Appendix.

\subsection{An example of semi-blind deconvolution}
\label{sec:deconv}

Figure \ref{fig:blurry-jupiter} contains a photograph of Jupiter
taken in the methane band (780nm) on a grid of size $256\times256$
pixels, each takes an integer value from 0 to 255. As can be seen,
the image is somewhat blurry; the inverse problem is to recover the `true' unblurred version of the image. 

\begin{figure}
\centerline{\includegraphics[scale=0.5]{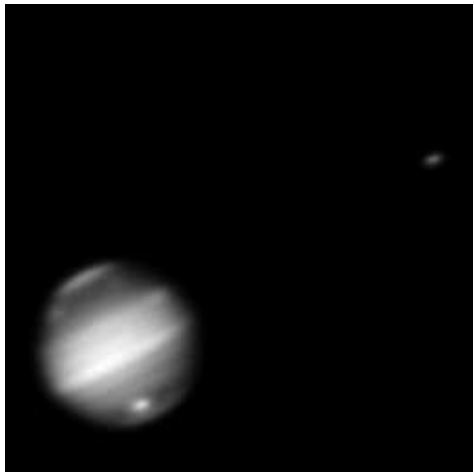}}
\caption{A blurry photograph of Jupiter taken in the methane band (780nm).\label{fig:blurry-jupiter}}
\end{figure}

We model the blurring process by convolution with a fixed point-spread function that we denote $k$. Denoting the 
true unknown image by $x$ and the data by $y$, the observation process may be written
\begin{eqnarray}
y & = & k\ast x+\eta\nonumber \\
 & = & Ax+\eta \label{eq:A}
\end{eqnarray}
where the forward map $A$ is the linear operator representing convolution and $\eta$ is  an unknown `noise' vector representing measurement errors including digitization. In the convolution form we think of $x$ and $y$ as images in the plane, while in the operator form it is more convenient to write $x$ and $y$ as vectors, with observations $y\in\mathbb{R}^m$ and the unknown $x\in\mathbb{R}^n$. Through most of this paper we compute with images $x$ and $y$ that are $p\times p$ pixels in size, i.e., $n=m = p^2$. We are interested in the practical case where one wishes to perform inference from an existing data set so $m$ is necessarily finite\footnote{In industrial applications, time and money  typically increase with $m$, so smaller is better.}. However, the size of the reconstructed image $n$ is always a modeling choice; we are also interested in how that choice affects computational cost, particularly in the infinite dimensional limit $n\rightarrow\infty$. 

Some authors take \emph{the} infinite dimensional limit to mean that both $n\rightarrow\infty$ and $m\rightarrow\infty$, that is the size of data also tends to infinity, as is common in studies of idealized inverse problems (see, e.g. the classic work~\cite{ColtonKress98}). 
One needs to be aware of the potential confusion, and that results that rest on the assumption $m\rightarrow\infty$, such as those presented in~\cite{NotBardsleyStuart}, may not hold for the case considered here. 

Since the point-spread function $k$ is unknown, and the forward map is convolution, this deblurring problem is often called \emph{blind deconvolution}. However, for the sake of this example we will assume a form for $k$ as follows. The upper right-hand portion of the photograph shows one of the Galilean satellites that is small enough to be considered close to a point source, and so we can use that region of the photograph ($32 \times 32$ pixels), normalized, as an approximation to the point-spread function. Hence, we actually implement  \emph{semi-blind} deconvolution. It is possible to model and infer the point-spread function within regularized inversion or the Bayesian calculation, to implement true blind deconvolution\footnote{We set modeling of the point-spread function and blind deconvolution as a challenge question for students when we use this as a classroom example.}, though we do not consider it further here. 

The forward map $A$ is often computed efficiently using the fast Fourier transform (FFT).  We follow~\cite{BardsleyRTO} by assuming periodic boundary conditions for $x$ in the first computed example (but not in the second). 
The assumption of periodic boundary conditions means that convolution is equivalent to circular convolution, i.e.
\[ Ax \equiv k \circledast x, \]
so $A$ is diagonalized by the discrete Fourier transform and the action of $A$ can be computed in the transform domain for a cost of $n$ multiplications, with a further $\mathcal{O}(n\log n)$ operations for the inverse FFT to the image domain.

The point spread function is non-zero for only $32 \times 32$ pixels and there appears to be a black band (of width $>16$ pixels) around the edge of the blurry photograph. Thus, it is likely that the true image $x$ is black in this $16$ pixel band and hence no significant artifacts will be generated by assuming periodic boundary conditions.  
For images that are not dark at the edges it is usual to zero pad both the data $y$ and recovered image $x$ before calculating the FFT~\cite{NumRec}.

Neither of these applications of the FFT (zero padding or not) gives a complete model of the forward map as neither includes the influence of bright pixels \emph{outside} the image region. It is natural in the Bayesian formulation to include and marginalize over these `nuisance' pixels, which we do in the second computed example.

\section{Solution by Regularization}
\label{sec:reg}

We regularize the inverse of $A$ by penalizing images that are not smooth in the sense that a pixel differs from the average of its neighbors, as in~\cite{Vogel}. 
Define the neighborhood structure using the usual square pixel lattice; the neighbors of pixel $i$ are the pixel locations that are above, below, to the left, and to the right of pixel location $i$. 

Write $j\sim i$ when pixel $j$ is a neighbor of pixel $i$, and $\partial_{i}$ for the set of neighbors of pixel $i$, i.e.
\[
\partial_{i}=\left\{ j\neq i|j\sim i\right\}.
\]
Denote by $\left|\partial_{i}\right|$ the number of neighbors of pixel $i$; $\left|\partial_{i}\right|=4$ for all pixels when periodic boundary conditions are assumed.
Define matrix $L$ to be 
\begin{equation}
\label{eq:L}
L_{ij}=\begin{cases}
\left|\partial_{i}\right| & i=j\\
-1 & j\in\partial_{i}\\
0 & \mbox{otherwise}
\end{cases}
\end{equation}
which is the graph Laplacian on the neighborhood graph.  The action of $L$ is equivalent to convolution with the $5$-point finite-difference stencil~\cite{dahlquistbjork}, i.e.,
\[ Lx \equiv x\ast \left(\begin{array}[]{ccc}
                               & -1 &   \\
                            -1 & 4 & -1 \\
                               & -1 &
                          \end{array} \right).\]
In the periodic case $L$ is also diagonalized by the discrete Fourier transform with efficient calculation possible using the FFT. Even without that assumption, sparsity of $L$ allows efficient operation, also requiring only $\mathcal{O}(n)$ operations.


The regularized estimate for the deblurred image is defined by
\begin{equation}
 	\hat{x}_\lambda = \arg \min_{x} \lVert Ax - y \rVert^2 + \lambda x^T L x
 	\label{eq:varreg}
\end{equation}
for regularization parameter $\lambda\geq 0$. The minimizer $\hat{x}_\lambda$ may be calculated by solving the generalized deconvolution equations 
\begin{equation}
  (A^T A + \lambda L) \hat{x}_\lambda = A^T y. \label{eq:gendeconv}
\end{equation}
With periodic boundary conditions, this linear solve requires $\mathcal{O}(n)$ operations in the transform domain, with one-off $\mathcal{O}(n \log n)$ computing costs when transforming between image and transform domains using the FFT. Without this simplifying assumption, efficient solution of the system in~\eqref{eq:gendeconv} may be performed using a linear solver that exploits the sparsity of $(A^T A + \lambda L)$.

As noted in~\cite{BardsleyRTO}, $-h^{-2}L$ is the discrete Laplacian, where $h$ is the pixel spacing. Since $h\propto 1/p$ then $nL\rightarrow-c\nabla^2$ (with appropriate boundary conditions) for some constant $c$, as $n\rightarrow \infty$. 
The transformations $L\leftarrow nL$ and $\lambda\leftarrow\lambda/n$ leave all terms in the right-hand side of~\eqref{eq:varreg} unaltered, so, for finite $n$, the use of $L$ or the negative discrete Laplacian defines \emph{identical} sets of deblurred images, though with altered value of the regularization parameter. We have used the graph Laplacian, as in other computational work~\cite{Vogel}, to minimize roundoff errors.

We used the L-curve criterion~\cite{HansenOLeary} to select $\lambda$, using Hansen's {\tt l\_corner.m} algorithm in {\tt regtools}~\cite{HansenRegTools} that takes $200$ solves of~\eqref{eq:gendeconv} to find the `corner' of the L-curve. By Parseval's theorem the data misfit and regularization semi-norm may be computed in the transform domain, so each solve requires $\mathcal{O}(n)$ operations.  For our Jupiter example we found a regularization parameter $\lambda = 5.6724 \times 10^{-3}$ in $0.517$ seconds (including $0.507$ seconds for $200$ solves) in MATLAB R2012b using a Lenovo X230 laptop with an Intel CORE i5 processor.  The computed deblurred image and L-curve are in Figure \ref{fig:deconv-jupiter}.  In the L-curve figure, the cluster of crosses correspond to sampled images (more on this later).  

\begin{figure}
\includegraphics[width=0.45\textwidth]{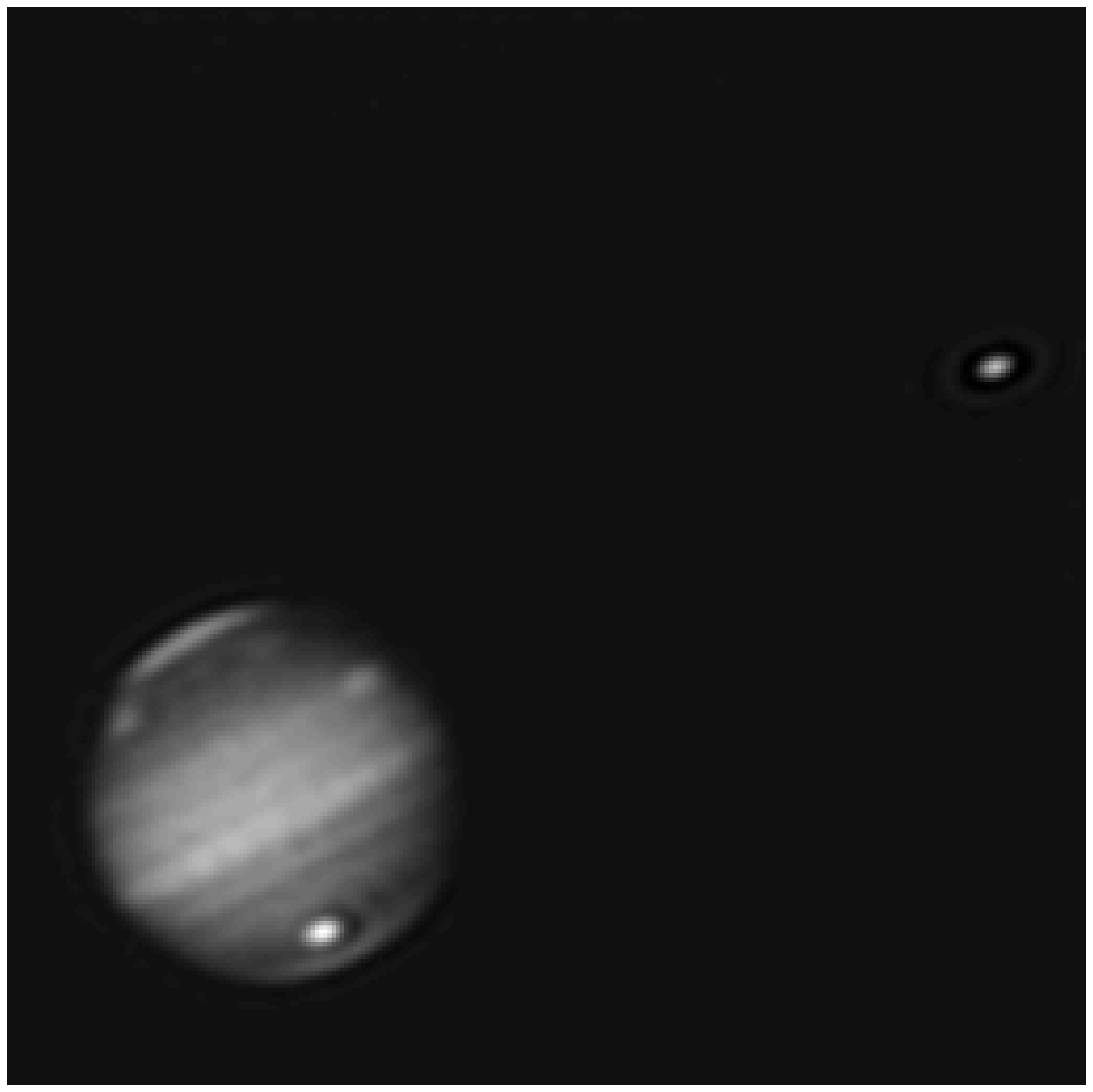}
\includegraphics[width=0.54\textwidth]{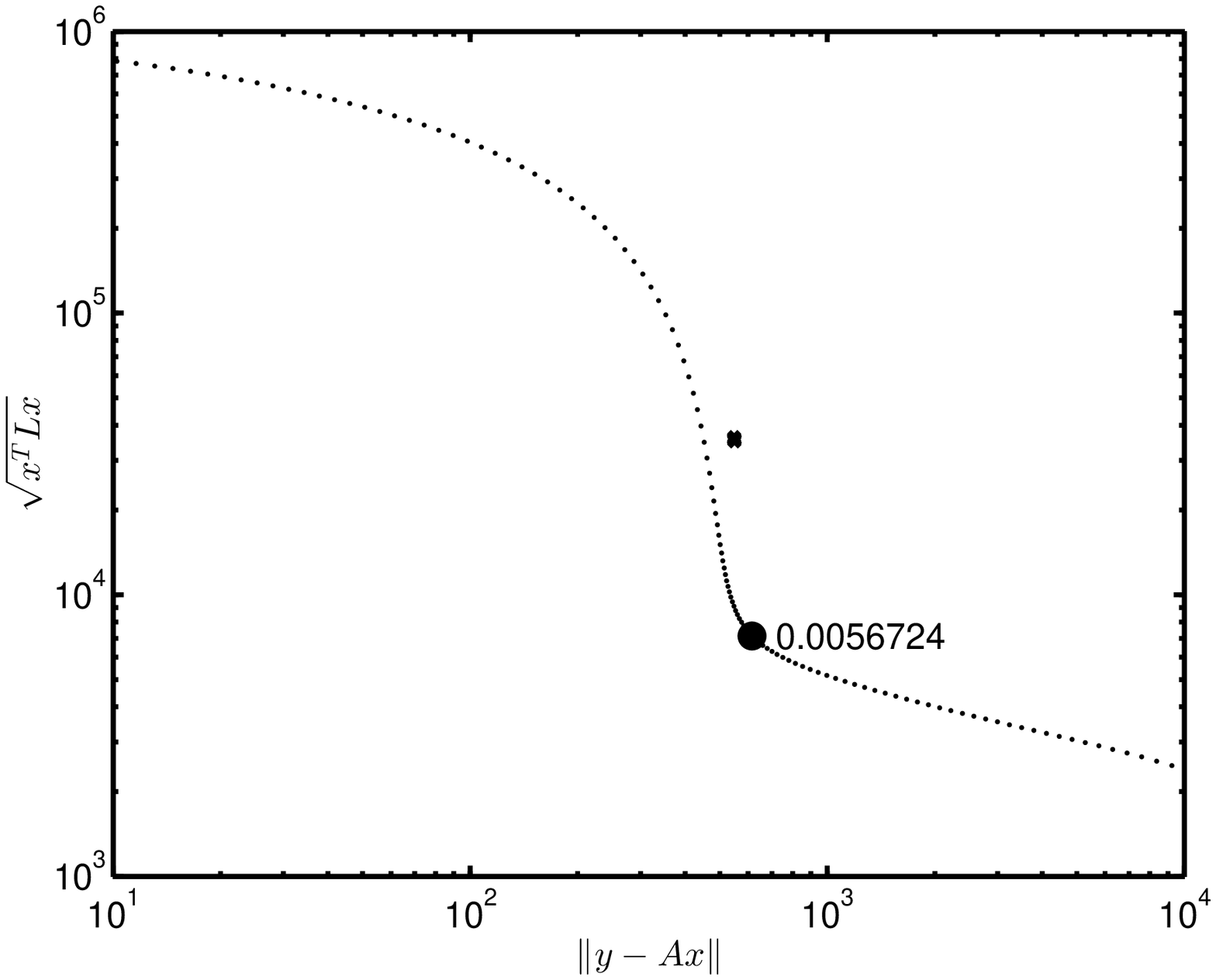}
\caption{Left; the regularization estimate of a deconvolved Jupiter computed with $\lambda = 5.6724 \times 10^{-3}$.  Right; the corresponding L-curve.  The tight cluster of crosses correspond to sampled images.}
\label{fig:deconv-jupiter}
\end{figure}

In the deblurred image of Jupiter in Figure \ref{fig:deconv-jupiter} we see artificial `ringing' phenomena around the satellite and Jupiter.  The ringing effect is reduced with better modeling of the point-spread function, though damping of the solution and some ringing is a typical feature of regularized solutions.

\section{Bayesian modeling and inference}
\label{sec:Bayes}
Bayesian modeling naturally proceeds in a hierarchical manner and allows, one could say  \emph{requires}, the specification of distributions over unknown quantities. The Bayesian inferential framework accommodates very general models for the unknown latent field, allowing representations in arbitrary measure spaces, often drawing on ideas from spatial statistics~\cite{HurnHusbyRueHSSS,BCG} or pattern theory~\cite{GrenanderMiller2007}. Here we restrict representations to the low-level image model used in the regularized solution and reuse model components and make distributional choices to enable a comparison of methods that are closest equivalents. 

\subsection{Hierarchical stochastic model}
Observed data $y$ is related to the unknown true image $x$ via the observation model in~\eqref{eq:A}.  We assume that the components of the  noise vector $\eta$ in~\eqref{eq:A} are independent and identically distributed as a zero mean Gaussian with some variance $\gamma^{-1}$, i.e., $\eta\sim \mathrm{N}\left(0,\gamma^{-1} I\right)$ where $I$ is the identity matrix.  Such a model is common in settings where knowledge of instrumentation justifies the assumption that errors on individual measurements are independent, with zero mean, but with unknown average amplitude\footnote{Studies in inverse problems that assume $m\rightarrow\infty$ have an infinite set of samples of the \emph{noise}, because the forward map is effectively finite rank. Hence, sample estimates of almost any statistic of the noise, including $\gamma$, have zero variance; this is one origin of the reducibility of the Gibbs sampler noted in~\cite{NotBardsleyStuart} that does not occur in the practical case considered here.}. Then the distribution over $y$ conditioned on image $x$ and precision $\gamma$ is
\begin{equation}
y|x,\gamma  \sim  \mbox{N}\left(Ax,\gamma^{-1} I\right).\label{eq:likelihood}
\end{equation}

We are able to prefer smoothness in the unknown `true' image $x$ by modeling $x$ as a draw from a Gaussian Markov random field (GMRF) that 
assigns low probability to non-smooth images. We use a \emph{locally linear} GMRF~\cite{higdon2006} defined by the conditional distributions
\[
x_{i}|x_{\partial_{i}}\sim\mbox{N}\left(\left|\partial_{i}\right|^{-1} {\textstyle \sum_{j\in\partial_{i}}} x_{j},(\delta\left|\partial_{i}\right|)^{-1}\right)
\]
in which $\delta$ is an unknown lumping constant. This models each pixel value as a Gaussian random variable with mean equal to
the average of neighboring pixel values, and with some variance controlled by a lumping constant $\delta$. Thus, the preferred value for each pixel is the average of its neighbors, while the lumping constant controls how strongly that preference is asserted. The joint distribution over the vector $x$ for given $\delta$ is then the (intrinsic) multivariate Gaussian 
\begin{equation}
  x|\delta \sim \mbox{N}(0,(\delta L)^{-1}) \label{eq:prior}
\end{equation}
where the matrix $L$ is the discrete Laplacian defined in~\eqref{eq:L}. When assuming periodic boundary conditions $L$ has a zero eigenvector  (the constant vector) and so the prior is improper. However, the posterior is normalizable so all marginal and conditional distributions are well defined. It is usual to regularize the inverse in~\eqref{eq:prior} by adding a small constant (`a nugget') to the diagonal of $L$~\cite{higdon2006}. Alternatively, we may define $L^{-1}$ to be the Moore-Penrose inverse of $L$ in which case the constant vector is also a zero eigenvalue of the covariance. 


The two parameters $\gamma$ and $\delta$ are assumed unknown and so we must also specify distributions over these as well. 
For the present we will define the vector parameter $\theta = (\gamma,\delta)$ and simply write the joint distribution as $\pi(\theta)$.

Combining the stochastic models~\eqref{eq:likelihood}, \eqref{eq:prior}, and for $\theta$, we may write the linear-Gaussian Bayesian model in the slightly more general form
\begin{subequations}
\label{eqs:lgbm}
\begin{eqnarray}
y|x,\theta & \sim & \mbox{N}\left(Ax,\Sigma\left(\theta\right)\right)\label{eq:yxtheta}\\
x|\theta & \sim & \mbox{N}\left(\mu,Q^{-1}\left(\theta\right)\right)\label{eq:xtheta}\\
\theta & \sim & \pi\left(\theta\right).\label{eq:theta}
\end{eqnarray}
\end{subequations}
This hierarchical stochastic model occurs commonly in statistics~\cite{SimpsonLindgrenRue2012}, in which $y$ is \emph{observed data}, $x$ is a \emph{latent field} with mean $\mu$, and $\theta$ is a vector of \emph{hyperparameters} that model uncertainties in the measurement noise covariance $\Sigma$ and in modeling of the precision (inverse of covariance) matrix $Q$ of the latent field process. In the language of Bayesian analysis, \eqref{eq:yxtheta} defines the \emph{likelihood function} for unknown $x$ and $\theta$ once data $y$ is observed, \eqref{eq:xtheta} is the \emph{prior distribution} over latent field $x$ with hyperparameters $\theta$, and \eqref{eq:theta} sets the \emph{hyperprior distribution} over those hyperparameters.

The prior distribution used in~\eqref{eq:prior} has played a role in some of the earliest developments in Bayesian statistics. While a comprehensive discussion on (hyper)prior distributions is well outside the scope of this paper, it is interesting to note a few details that are relevant to our study.

The parameter $\delta$ in~\eqref{eq:prior} is a positive scale parameter, whose numerical value depends on the units chosen for the spacing between pixels; a change of units corresponds to the transformations $L\leftarrow cL$ and $\delta\leftarrow\delta/c$, for some $c>0$. Jeffreys~\cite{Jeffreys453} addressed the question of how to set the functional form of $\pi(\delta)$ so that inference over $x$ is independent of the choice of units, and developed what is now called the Jeffereys scale prior\footnote{Jeffreys priors now form a general class of prior distributions (see, e.g.~\cite{RobertChopinRousseau2009}), often called reference priors, with the modern development under the moniker of \emph{objective priors}~\cite{BergerBernardoSun}.} that gives  \emph{identical} marginal posterior distributions for $x|y$, whatever the scaling of $\delta$. Hence, the Jeffreys scale prior is often viewed as being \emph{uninformative} about the units of $\delta$ since inference on $x$ is independent of that choice. In the Jupiter problem, the difference between using the graph Laplacian $L$ or the negative discrete Laplacian $nL$, for finite $n$, is a scaling of $\delta$ by $n$, just as we observed for scaling $\lambda$ in Section~\ref{sec:reg}. Hence, the Jeffreys scale prior produces inference for $x$ that is identical whether $L$ or $nL$ is used to define the prior precision in~\eqref{eq:prior}. 

Bardsley~\cite{BardsleyRTO} employed a conjugate prior distribution over $\delta$ to enable Gibbs sampling, and observed (\cite[\S 4.4]{BardsleyRTO}) that posterior inference differed when using $L$ versus $nL$. From the perspective of Jeffreys priors this difference is a consequence of the conjugate prior being \emph{informative} with respect to the scaling of $\delta$, including units. 
The same choice and problem occurs in~\cite{NotBardsleyStuart}. 

Simply using the Jeffreys prior in high dimensional settings, such as inverse problems, can lead to significant unintended biases~\cite{RobertChopinRousseau2009}; see also~\cite{WatzenigFox,NichollsJones2001} for examples and resolution. 
We consider a very positive feature of the MTC sampler developed here is that it can operate with \emph{any} hyperprior distribution $\pi(\theta)$ (that can be evaluated) and so the algorithm does not prejudice those considerations. 

\subsection{Posterior inference}
The focus of inference is the \emph{posterior distribution} over unknowns $x$
and $\theta$ conditioned on measured $y$, given by Bayes' rule as
\begin{equation}
\pi\left(x,\theta|y\right)=\frac{\pi\left(y|x,\theta\right)\pi\left(x,\theta\right)}{\pi\left(y\right)}.
\label{eq:posterior}
\end{equation}
Note that we are performing the standard abuse of notation by using the symbol $\pi$ to denote any probability density function, and associated distributions, with the particular function determined implicitly by the arguments. Solutions and uncertainties may be computed as the posterior expectation of some function $h$ of $x$,
\[
\text{E}_{x,\theta |y}\left[ h\left(x \right) \right] =\int h\left(x
\right) \pi \left(x,\theta |y\right)  \,\dd x\, \dd\theta
\]
which implicitly averages over the nuisance parameter $\theta$. Sample-based methods use a Monte Carlo estimate of the integral. When $(x,\theta)^{(1)}, \ldots,(x,\theta)^{(N)}\sim \pi \left(x,\theta |y\right)$ are iterates of an ergodic Markov chain,
\[\text{E}_{x,\theta |y}\left[ h\left(x \right) \right] \approx
\frac{1}{N}\sum_{i=1}^{N}h\left(x^{(i)} \right) \]
with convergence guaranteed by a central limit theorem~\cite{kipnis1986}.

For computation, it is important to observe that the numerator in~\eqref{eq:posterior}
\[ \pi\left(y|x,\theta\right)\pi\left(x,\theta\right) = \pi\left(y|x,\theta\right)\pi\left(x|\theta\right)\pi\left(\theta\right) \]
may be evaluated as the product of the three density functions in the hierarchical model~\eqref{eqs:lgbm}. However the normalizing constant
\[\pi(y)=\int\int \pi \left(y, x,\theta\right)\,\dd x\,\dd\theta\]
is typically not available in the sense that it is infeasible to compute. We will assume throughout that $\pi(y)$ is finite, and non-zero. Hence the posterior density may be evaluated up to an unknown constant, and therefore can be explored using Metropolis-Hastings MCMC~\cite{GRS_MCMCIntro}. Indeed this is the most common method of sampling from the posterior distribution being easy to implement, typically employing a random-walk proposal distribution. Examples of such methods include the adaptive Metropolis (AM), Metropolis-adjusted Langevin (MALA) and hybrid (or Hamiltonian) Monte Carlo (HMC) methods, amongst many others. Random-walk Metropolis-Hastings MCMC methods on the full state space are typically very slow to converge, not uncommonly requiring $10^4$ to $10^5$ iterations for a single independent sample \cite{Debski, FoxNicholls97}, with each iteration requiring simulation of high-dimensional data over a high-dimensional image space. We do not implement such a calculation here as the marginal then conditional algorithm we present next is several orders of magnitude cheaper.

\subsection{Marginal then conditional sampling}
\label{sec:mtc}
We propose to significantly speed up sampling by first sampling from the marginal posterior distribution over hyperparameters $\theta$
\[ \pi\left(\theta|y\right) = \int \pi\left(x,\theta|y\right)\,\dd x \]
then from the full conditional distribution over $x$, to give Algorithm~\ref{alg:margcond}, that we call marginal then conditional (MTC) sampling.  In statistics, this algorithm is widely known as the marginal algorithm, see e.g. \cite{vanDyk,PR2008}. We prefer the more descriptive term, and use it to refer to both the decomposition and the computational scheme we present later. 
\begin{algorithm}
\label{alg:margcond}

draw $\theta \sim \pi\left(\theta|y\right)$

draw $x  \sim  \pi \left(x|y,\theta\right)$

\caption{MTC sampling from the posterior distribution}
\end{algorithm}

\begin{lemma}
\label{lem:mtc}
Algorithm~\ref{alg:margcond} generates a sample from the posterior distribution, i.e., 
\[ \left(x ,\theta\right) \sim \pi\left(x,\theta|y\right). \]
\end{lemma}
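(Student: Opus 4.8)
The plan is to prove that the joint density of the pair $(x,\theta)$ produced by Algorithm~\ref{alg:margcond} equals the posterior density $\pi(x,\theta|y)$, by directly writing down the law of the sampler's output and applying the definition of conditional probability. This is essentially the chain rule for densities run in reverse, so the argument is short; the main work is to be careful about what is being conditioned on and to confirm that every factor is a genuine, well-defined density (in particular that $\pi(\theta|y)$ is proper even though the prior \eqref{eq:prior} may be improper).

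First I would observe that the two draws in Algorithm~\ref{alg:margcond} are performed sequentially: $\theta$ is drawn from $\pi(\theta|y)$, and then $x$ is drawn from the conditional $\pi(x|y,\theta)$ using the realized value of $\theta$. Hence the joint density of the output pair is, by construction, the product
\[
\pi(\theta|y)\,\pi(x|y,\theta).
\]
The key step is then to recognize this product as the posterior. By the definition of the conditional density,
\[
\pi(x|y,\theta) = \frac{\pi(x,\theta|y)}{\pi(\theta|y)},
\]
valid wherever $\pi(\theta|y)>0$, so that multiplying by $\pi(\theta|y)$ cancels the denominator and yields exactly $\pi(x,\theta|y)$. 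This is the whole content of the lemma.

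To make the argument rigorous I would supply the supporting facts that license the conditional-density manipulation. First, $\pi(\theta|y)=\int \pi(x,\theta|y)\,\dd x$ is the marginal of the posterior over $x$, which is the very object Algorithm~\ref{alg:margcond} draws from in its first line, so the two uses of the symbol agree. Second, the factorization $\pi(x,\theta|y)=\pi(\theta|y)\,\pi(x|y,\theta)$ is just the chain rule for the joint posterior, and it is the identity we are reading off in reverse; one only needs $\pi(\theta|y)>0$ on the support, which holds under the standing assumption (stated earlier) that $\pi(y)$ is finite and nonzero, guaranteeing a proper, normalizable posterior.

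I expect the main obstacle to be a presentational one rather than a mathematical one: ensuring that the marginal $\pi(\theta|y)$ and the full conditional $\pi(x|y,\theta)$ are well defined despite the intrinsic (improper) prior in \eqref{eq:prior}. The excerpt already notes that the posterior is normalizable and all marginal and conditional distributions are well defined, so I would invoke that remark to dispatch this point quickly. With that in hand the proof reduces to a single line of density algebra, and I would present it as such, emphasizing that the lemma is simply the statement that sampling $\theta$ from its posterior marginal and then $x$ from its full conditional reconstructs a draw from the joint posterior.
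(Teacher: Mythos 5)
Your proof is correct and follows exactly the same route as the paper's: the joint density of the sampler's output is $\pi\left(x|y,\theta\right)\pi\left(\theta|y\right)$, which equals $\pi\left(x,\theta|y\right)$ by the chain rule for conditional densities. The additional care you take about propriety of $\pi(\theta|y)$ despite the improper prior is a reasonable elaboration, but the paper treats this as already settled by its standing assumptions and states the identity in a single line.
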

\begin{proof}
The density function over $x$ and $\theta$ is $ \pi\left(x|y,\theta\right)\pi\left(\theta|y\right) = \pi\left(x,\theta|y\right)$.
\end{proof}

When the samples over $\pi\left(\theta|y\right)$ in Algorithm~\ref{alg:margcond} are \emph{independent}, then so are the posterior samples $\left(x ,\theta\right)$. The case where $\theta \sim \pi\left(\theta|y\right)$ is generated by one step of a (geometrically ergodic) MCMC was considered by Acosta, Huber \& Jones~\cite{AcostaHuberJones} who called Algorithm~\ref{alg:margcond} a `linchpin variable sampler' with $\theta$ being the `linchpin variable'. They showed that the convergence rate of the chain in $\left(x ,\theta\right)$ is the same as the chain in $\theta$. Later, we will use a MCMC to draw samples $\theta \sim \pi\left(\theta|y\right)$; however, that iteration is sufficiently fast that we will take many steps of the MCMC to generate an effectively independent $\theta$ before drawing $x  \sim  \pi \left(x|y,\theta\right)$. The observation that independent $\theta$ gives independent $\left(x ,\theta\right)$ is a degenerate form of the result in~\cite{AcostaHuberJones}.

\subsubsection{Marginal posterior for $\theta$}{~}\\

\begin{lemma}
\label{lem:marg}
\[
\pi\left(\theta|y\right)=\frac{\pi\left(y|\theta,x\right)\pi(x|\theta) \pi\left(\theta\right)}{\pi\left(x|\theta,y\right)\pi\left(y\right)}
\]
\end{lemma}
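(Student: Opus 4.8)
The plan is to obtain the identity by two successive applications of Bayes' rule, first at the level of the latent field $x$ and then at the level of the hyperparameters $\theta$; the stated expression is really nothing more than a rearrangement of the definition of the full conditional $\pi(x|\theta,y)$.

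First I would write Bayes' rule for $x$ conditioned on both $\theta$ and $y$, namely $\pi(x|\theta,y) = \pi(y|x,\theta)\,\pi(x|\theta)/\pi(y|\theta)$, and solve for the conditional evidence to get $\pi(y|\theta) = \pi(y|x,\theta)\,\pi(x|\theta)/\pi(x|\theta,y)$. Then I would apply Bayes' rule for $\theta$ given $y$, namely $\pi(\theta|y) = \pi(y|\theta)\,\pi(\theta)/\pi(y)$, and substitute the previous expression for $\pi(y|\theta)$. The claimed formula follows at once, with no calculation beyond multiplying the three factors. Equivalently, one may recognise the numerator directly as the full joint density $\pi(y,x,\theta)$ and expand the full conditional in the denominator via $\pi(x|\theta,y)=\pi(x,\theta,y)/\pi(\theta,y)$, after which the joint densities cancel and leave $\pi(\theta,y)/\pi(y)=\pi(\theta|y)$.

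Since the manipulation is algebraically immediate, the real work is checking well-definedness rather than overcoming any analytic difficulty. Each rearrangement divides by a density, so I would note that the excerpt's standing assumption that $\pi(y)$ is finite and nonzero, together with normalizability of the posterior, ensures that $\pi(y|\theta)$ and $\pi(x|\theta,y)$ are positive on the relevant supports, so no denominator vanishes.

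The point I would stress --- and which is the substantive content of an otherwise routine lemma --- is that the left-hand side $\pi(\theta|y)$ carries no dependence on $x$, even though $x$ appears explicitly throughout the right-hand side. Consequently the right-hand side must in fact be constant in $x$, and one is free to evaluate it at any convenient value of the latent field. This is the candidate's-formula device (due to Besag, and exploited by Chib for marginal-likelihood estimation), and it is precisely what makes the marginal posterior cheap to evaluate in the MTC sampler developed subsequently, since it reduces the otherwise intractable marginalization integral to an evaluation of three model densities and one full-conditional density at a single arbitrary $x$.
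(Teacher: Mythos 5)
Your proof is correct and takes essentially the same route as the paper: the paper equates the two factorizations $\pi(x,y,\theta)=\pi(x|\theta,y)\pi(y|\theta)\pi(\theta)=\pi(y|x,\theta)\pi(x|\theta)\pi(\theta)$ and then substitutes $\pi(y|\theta)\pi(\theta)=\pi(\theta|y)\pi(y)$, which is exactly your two successive applications of Bayes' rule. Your added remarks on non-vanishing denominators and on the right-hand side being constant in $x$ (the candidate's-formula observation) are sound and consistent with the paper's subsequent use of the identity.
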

\begin{proof}
$\pi\left(x,y,\theta\right)=\pi\left(x|\theta,y\right) \pi(y|\theta) \pi\left(\theta\right)$ and $\pi(x,y,\theta) = \pi(y|x,\theta)\pi(x|\theta)\pi(\theta)$.  Writing $\pi(y|\theta)\pi(\theta)=\pi(\theta|y)\pi(y)$ and using $\pi(y) \neq 0$, the result follows.
\end{proof}

Since $\pi\left(y\right)$ does not depend on $\theta$, it follows that
\begin{equation}
\pi\left(\theta|y\right)\propto\frac{\pi\left(y|\theta,x\right)\pi(x|\theta)\pi\left(\theta\right)}{\pi\left(x|\theta,y\right)}.\label{eq:margtheta1}
\end{equation}
This result is given for Gaussian distributions in~\cite{SimpsonLindgrenRue2012}, though it holds more generally as~\eqref{eq:margtheta1} shows. In principle the right-hand side of~\eqref{eq:margtheta1} may be evaluated using any value of $x$ for which $\pi\left(x|\theta,y\right)$ is appreciable enough to avoid round-off issues. In the linear-Gaussian case~\eqref{eqs:lgbm} we can eliminate $x$ to give
\begin{flalign}
& \pi\left(\theta|y\right) \propto 
\sqrt{ \frac{\det(\Sigma^{-1})\det(Q)}{\det(Q+A^T\Sigma^{-1}A)}}  &\nonumber\\
& \exp \left\{ -\frac{1}{2} (y-A\mu)^T \left[ \Sigma^{-1}  - \Sigma^{-1} A (A^T\Sigma^{-1}A + Q)^{-1} A^T \Sigma^{-1} \right]  (y-A\mu) \right\} \pi(\theta).\label{eq:overlooked}&
\end{flalign}
This reduction appears to have been overlooked in~\cite{SimpsonLindgrenRue2012}. Note that this distribution is not Gaussian because of the dependence of $\Sigma$ and $Q$ on $\theta$ (that is not shown for brevity), and that arbitrary hyperprior distribution $\pi(\theta)$ is allowed.

For our Jupiter example this formula simplifies to
\begin{equation}
  \pi(\theta|y) \propto \delta^{n/2}
	\exp \left( -\frac{1}{2} g\left(\lambda \right) - \frac{\gamma}{2} f\left(\lambda \right) \right) \pi(\theta)
  \label{eq:jupitermarg}
\end{equation}
where $\lambda = \delta/\gamma$, and the univariate functions $f$ and $g$ are
\begin{align}
  f(\lambda) &= y^T y - (A^Ty)^T (A^TA + \lambda L)^{-1} (A^T y)\label{eq:f}\\
  g(\lambda) &= \log \det\left( A^TA + \lambda L \right). \label{eq:g}
\end{align}
Both these functions are analytic, monotonic, and very mildly behaved over a wide range of arguments, as can be seen in Figure~\ref{fig:fg} (periodic case). 
\begin{figure}
\centerline{
\includegraphics[width=0.5\textwidth]{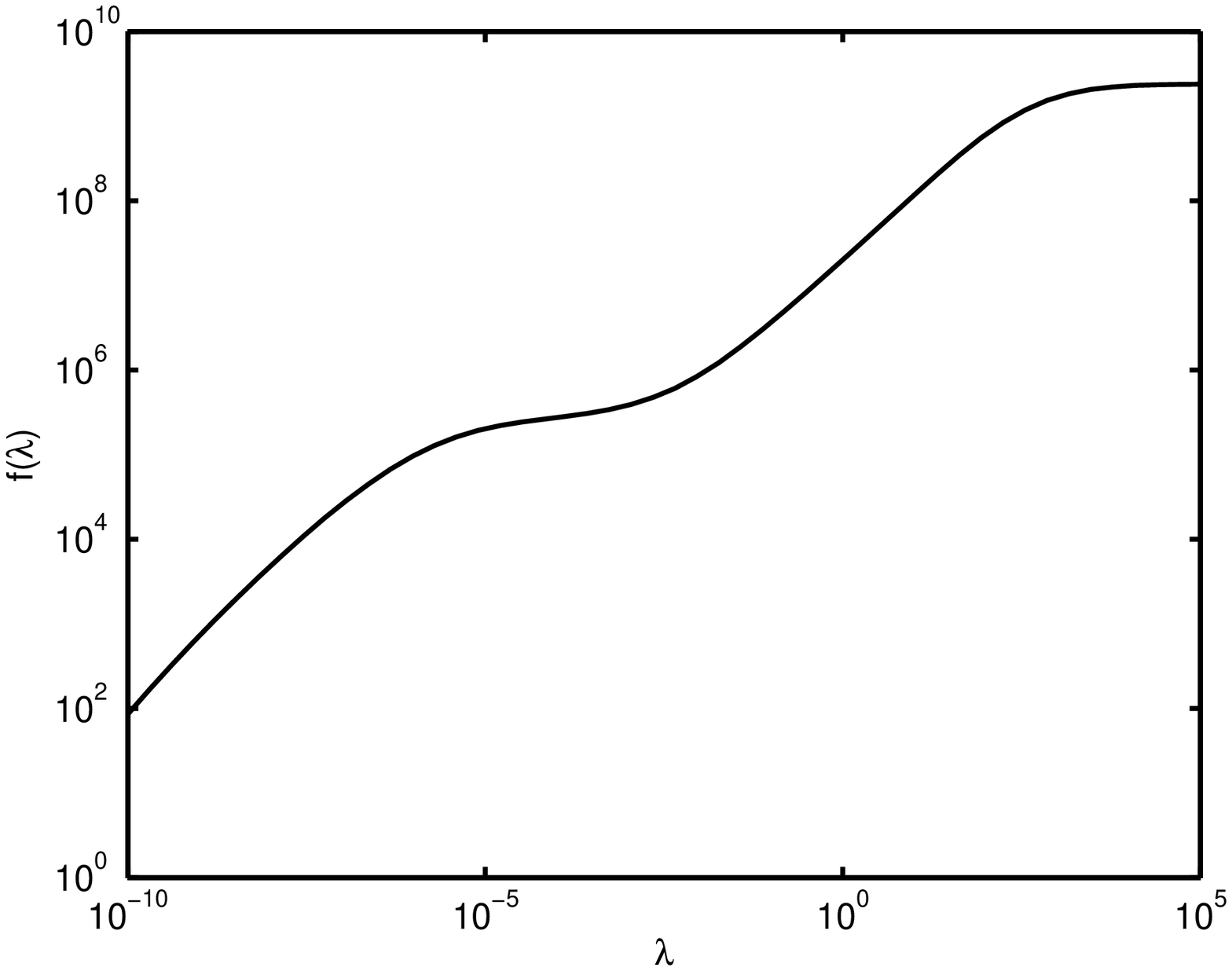} 
\includegraphics[width=0.5\textwidth]{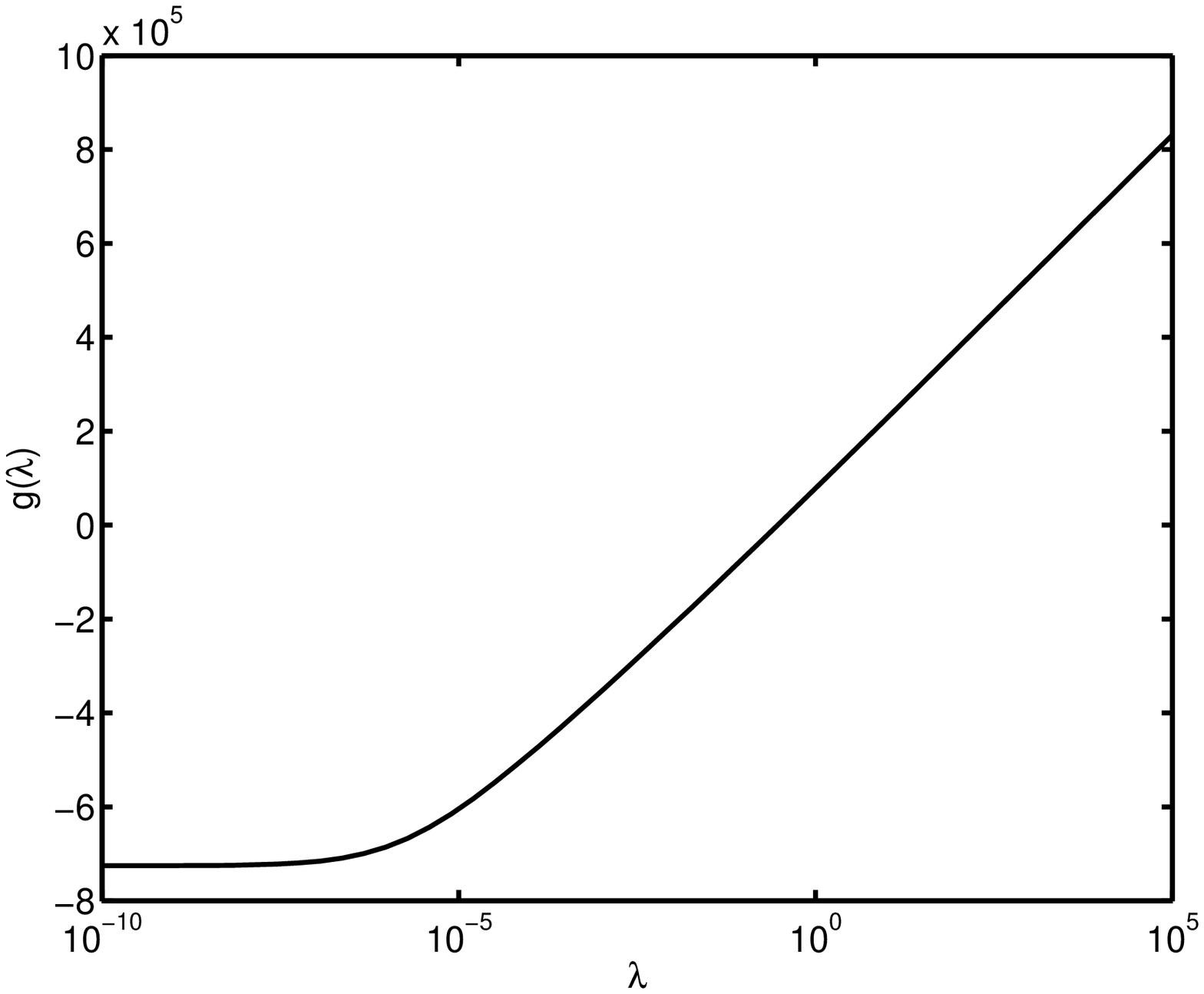} 
}
\caption{Functions $f(\lambda)$ and $g(\lambda)$ for the periodic $p\times p$ image model}
\label{fig:fg}
\end{figure}
One would therefore expect that efficient computation is possible.
We develop efficient calculation of the (difference of) functions $f$ and $g$, and sampling from $ \pi(\theta|y)$, in Section~\ref{sec:periodic} for periodic boundary conditions, and for the general case in Section~\ref{sec:nonperiodic}.

\subsubsection{Full conditional for $x$}
\label{sec:fullcond}
For the linear-Gaussian hierarchical model~\eqref{eqs:lgbm}, the full conditional for $x$ may be readily determined~\cite{higdon2006,SimpsonLindgrenRue2012} as the multivariate normal distribution
\begin{equation}
x|y,\theta\sim\Normal\left(\mu_{x|y,\theta},Q_{x|y,\theta}^{-1}\right)\label{eq:xgivenytheta}
\end{equation}
where
\begin{eqnarray*}
\mu_{x|y,\theta} & = & \mu+\left(Q+A^{T}\Sigma^{-1} A\right)^{-1}A^{T}\Sigma^{-1}\left(y-A\mu\right)\\
Q_{x|y,\theta} & = & Q+A^{T}\Sigma^{-1} A.
\end{eqnarray*}
Again, we have omitted the dependence of matrices on $\theta$ for brevity.

For the Jupiter example we have
\begin{equation}
\label{eq:xblock}
	x |\theta,y \sim \mbox{N}\left( (A^TA + (\delta/\gamma) L)^{-1} A^T y, (\gamma A^T A + \delta L)^{-1} \right).
\end{equation}
An independent sample from this distribution may be computed  by solving 
\begin{equation}
  \left( \gamma A\trans A + \delta L\right) x = \gamma A\trans y + w
  \label{eq:rto}
\end{equation}
where $w=v_1+v_2$ with independent $v_1\sim\Normal\left(0,\gamma A\trans A\right)$ and $v_2\sim\Normal\left(0,\delta L\right)$. 

In the periodic case, the FFT diagonalizes all matrices so the cost of sampling the random variables and solving~\eqref{eq:rto} is $\mathcal{O}(n)$ operations in the transform domain with a further $\mathcal{O}(n \log n)$ operations for each FFT. In the general case the cost of sampling random variables remains at $\mathcal{O}(n)$ operations since the covariance for $v_1$ is factorized and $A$ is sparse, while the neighborhood definition of $L$ which is the covariance for $v_2$ allows it to be written as a sum over cliques of $2\times 2$ rank-$1$ matrices; we call the resulting $\mathcal{O}(n)$ sampler for $v_2$ \emph{assembly by cliques} because it uses the same decomposition as \emph{assembly by elements} of a finite-element method stiffness matrix. Hence, in both cases, the cost of sampling $x |\theta,y$ is dominated by the cost of solving~\eqref{eq:rto}, which is precisely the same linear solve required in the generalized deconvolution step~\eqref{eq:gendeconv}.

We have written~\eqref{eq:rto} in the form given by Bardsley~\cite{BardsleyRTO}, who called the method \emph{randomize then optimize}. The calculation in~\eqref{eq:rto} had been previously used, to our knowledge, by Oliver, He and Reynolds in 1996~\cite{oliver1996conditioning} under the moniker \emph{randomized maximum likelihood} and more recently in~\cite{OrieuxFG12} as \emph{perturbation-optimization}. This formulation for sampling from the full conditional for the latent field was also used by Wikle \emph{et al.}~\cite{WMNB2001} who solved the system using early termination of a conjugate gradient solver, noting that the quality of the \emph{approximate} sample could be controlled by the convergence criterion. Inexact solution of~\eqref{eq:rto} followed by a Metropolis-Hastings accept/reject step has also been considered; \cite{GMI} used early termination of a conjugate gradient solver, while~\cite{NortonFoxAP} established convergence properties when using a finite number of steps of a linear iterative solver. 

\section{Comparative description of sampling algorithms}
\label{sec:comparative}
We present two sampling algorithms in addition to the MTC algorithm. The first is a block Gibbs sampler~\cite{higdon2006,BardsleyRTO,OrieuxFG12} that has been presented as an efficient way to sample from high-dimensional linear-Gaussian inverse problems. The second is the `one-block' algorithm introduced by Rue and Held~\cite{RueHeld} that almost always improves on the block Gibbs sampler in statistical efficiency while having similar cost per iteration. Finally we present the MTC algorithm that has the same statistical efficiency as the one-block algorithm but with lower computational cost per iteration.

\subsection{Block Gibbs sampler\label{sec:Gibbs}}
Random-walk MCMC sampling of the posterior distribution suffers from slow mixing due to high correlations within the distribution over images, and between the image and hyperparameter. Slow mixing within sampling of the high-dimensional image may be alleviated by updating the unknown image in a single Gibbs step, i.e., treating the image components in a single block. 
This is feasible because the full conditional distribution over image $x$, given everything else, is the multivariate normal~\eqref{eq:xgivenytheta} or \eqref{eq:xblock} so independent samples may be drawn using efficient methods from numerical linear algebra for solving the system~\eqref{eq:rto}.  

Block Gibbs sampling proceeds by drawing from the full conditional distributions over image $x$ then hyperpaprameters $\gamma$ and $\delta$, repeatedly in sequence. Hence it is also necessary to have the full conditional distributions for $\gamma|x,\delta,y$ and $\delta|x,\gamma,y$ available in a form that can be sampled. This is possible when using \emph{conjugate} prior distributions over the hyperparameters. For the Jupiter example we follow~\cite{higdon2006,BardsleyRTO} and use the Gamma distributions  $\gamma \sim \Gamma(\alpha_\gamma,\beta_\gamma)$ and $\delta \sim \Gamma(\alpha_\delta,\beta_\delta)$, i.e.,
\begin{eqnarray}
\pi(\gamma) & \sim & \gamma^{\alpha_\gamma-1} \exp(-\beta_\gamma \gamma) \label{eq:lprior} \\
\pi(\delta) & \sim & \delta^{\alpha_\delta-1} \exp(-\beta_\delta \delta) \label{eq:dprior}
\end{eqnarray}
in which $\alpha_{\gamma},\beta_{\gamma},\alpha_{\delta},\beta_{\delta}$ are constants chosen to make the hyperprior distributions ``relatively
uninformative''~\cite{BardsleyRTO}.  We use $\alpha_\gamma = \alpha_\delta = 1$ and $\beta_\gamma = \beta_\delta = 10^{-4}$ as in \cite{BardsleyRTO}, although this hyperprior distribution can be viewed as informative for the scale of $\delta$, as mentioned above.
The resulting conditional distributions over hyperparameters are
\begin{eqnarray}
\gamma|x,\delta,y & \sim & \Gamma\left(\tsfrac{m}{2}+\alpha_{\gamma},\tsfrac{1}{2}\left\Vert Ax-y\right\Vert ^{2}+\beta_{\gamma}\right)\label{eq:condl}\\
\delta|x,\gamma,y & \sim & \Gamma\left(\tsfrac{n}{2}+\alpha_{\delta},\tsfrac{1}{2}\left\Vert Ax-y\right\Vert ^{2}+\beta_{\delta}\right)\label{eq:condd}.
\end{eqnarray} 

The block Gibbs sampler is the only sampler we present that requires a mathematically convenient form for the hyperprior distribution. We see this as a major disadvantage of (block) Gibbs sampling, for the reasons given at the end of Section~\ref{sec:Bayes}. Despite that objection, we will use the hyperprior distributions~\eqref{eq:lprior} and \eqref{eq:dprior} in all sampling algorithms to enable exact comparisons.

A block Gibbs sampler may then be implemented by cycling through sampling from the conditional distributions in~\eqref{eq:xblock},
\eqref{eq:condl} and \eqref{eq:condd} to get Algorithm \ref{alg:Gibbs}, as implemented in~\cite{BardsleyRTO,higdon2006}.
\begin{algorithm}
\label{alg:Gibbs}
at state $x$, $\theta = (\gamma,\delta)$

draw $x|\gamma,\delta,y \sim \mbox{N}\left( (A^TA + (\delta/\gamma) L)^{-1} A^T y, (\gamma A^T A + \delta L)^{-1} \right)$

draw $\gamma|x,\delta,y  \sim  \Gamma(\frac{m}{2}+\lambda_{\gamma},\frac{1}{2}\left\Vert Ax-y\right\Vert ^{2}+\beta_{\gamma})$

draw $\delta|x,\gamma,y  \sim  \Gamma(\frac{n}{2}+\lambda_{\delta},\frac{1}{2}\left\Vert Ax-y\right\Vert ^{2}+\beta_{\delta})$

\caption{Gibbs sampling algorithm with blocking of the latent field}
\end{algorithm}
Most of the computational cost per iteration is contained in the draw from the large Gaussian latent field, since that requires a solve of~\eqref{eq:rto}.

In practical inverse problems, it is found that the block Gibbs sampler requires about $10^2$ to $10^3$ iterations per effectively independent sample~\cite{BardsleyRTO,BoneH,GMI}. This is about $2$ orders of magnitude improvement over na\"ive random-walk MCMC directly on the posterior distribution.

It is well known that the statistical efficiency of (block) Gibbs sampling is dependent on parameterization and that the rate of convergence may be improved with an appropriate re-parameterization, see e.g.~\cite{PRS2007}.  However, computational efficiency in the Jupiter example is not necessarily improved since re-parametrization will, in general, require three linear solves per iteration rather than one, increasing the cost per iteration. Recent results also show that re-parametrization can lead to dimension independent mixing~\cite{NotBardsleyStuart}. This does not imply dimension independent computational cost since the block Gibbs sampler remains a geometrically convergent algorithm that requires at least one linear solve per iteration, whose computational cost increases with dimension. In contrast, we will find that the MTC sampler requires just one linear solve per \emph{independent} sample, beyond a fixed setup phase. 

\subsection{One-block sampler\label{sec:oneblock}}

The one-block algorithm~\cite[\S 4.1.2]{RueHeld} is usually feasible whenever the calculations required for the block Gibbs sampler are feasible. Further, the one-block sampler almost always has better statistical efficiency than Gibbs, including after re-parametrization, and does not require a special form for the hyperprior distribution, so should be preferred over the block Gibbs sampler in most circumstances. 

The one-block algorithm is so named because the hyperparameter $\theta$
and latent field $x$ are blocked together within a single Metropolis-Hastings
accept-reject step. In this scheme a candidate hyperparameter $\theta'$
is drawn according to some proposal distribution $q\left(\theta'|\theta\right)$, typically a random-walk, then~\eqref{eq:xgivenytheta}
is utilized to draw $x'$ conditioned on $\theta'$ and $y$. The composite proposal $\left(x',\theta'\right)$ is then accepted with probability (w.p.)
given by the usual Metropolis-Hastings rule on the joint posterior distribution. The
effective proposal distribution for the composite proposal is then
\[
q\left(x',\theta'|x,\theta\right)=\pi\left(x'|\theta',y\right)q\left(\theta'|\theta\right),
\]
and the one-block algorithm may be written as Algorithm \ref{alg:oneblock}. 

\begin{algorithm}
\label{alg:oneblock}
at state $x,\theta$

draw $\theta'\sim q\left(\theta'|\theta\right)$

draw $x'\sim\pi\left(x'|\theta',y\right)$

accept $\left(x',\theta'\right)$ w.p. $\alpha\left(\left(x,\theta\right)\rightarrow\left(x',\theta'\right)\right)=1\wedge\dfrac{\pi\left(x',\theta'|y\right)\pi\left(x|\theta,y\right)q\left(\theta|\theta'\right)}{\pi\left(x,\theta|y\right)\pi\left(x'|\theta',y\right)q\left(\theta'|\theta\right)}$

otherwise reject

\caption{One-block algorithm}
\end{algorithm}

\begin{lemma}
\label{lem:oneblock}
In Algorithm~\ref{alg:oneblock} the transition kernel for the hyperparameter $\theta$ is in detailed balance with the marginal posterior distribution for $\theta|y$.
\end{lemma}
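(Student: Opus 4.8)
The plan is to show that, within Algorithm~\ref{alg:oneblock}, the update of $\theta$ decouples from $x$ and reduces to an ordinary Metropolis-Hastings step targeting $\pi(\theta|y)$, after which detailed balance is immediate. First I would note that the candidate $\theta'$ is drawn from $q(\theta'|\theta)$, which depends on the current state only through $\theta$, not through $x$. Consequently the only channel by which the current $x$ (or the auxiliary $x'$) could influence the evolution of $\theta$ is the acceptance probability, and the crux of the argument is to demonstrate that it does not.

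The key step is the algebraic simplification of the acceptance ratio using the factorization $\pi(x,\theta|y)=\pi(x|\theta,y)\pi(\theta|y)$ and its primed counterpart. Under this factorization the quotient $\pi(x',\theta'|y)/\pi(x'|\theta',y)$ collapses to $\pi(\theta'|y)$, while $\pi(x|\theta,y)/\pi(x,\theta|y)$ collapses to $1/\pi(\theta|y)$, so that the acceptance probability in Algorithm~\ref{alg:oneblock} reduces to
\[ \alpha\left(\left(x,\theta\right)\rightarrow\left(x',\theta'\right)\right)=1\wedge\frac{\pi(\theta'|y)\,q(\theta|\theta')}{\pi(\theta|y)\,q(\theta'|\theta)}, \]
with all dependence on $x$ and $x'$ eliminated. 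This is precisely the Metropolis-Hastings acceptance function for the target $\pi(\theta|y)$ with proposal $q$.

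With the cancellation in hand, the induced dynamics of $\theta$ is autonomous: the next value of $\theta$ depends on the present state only through $\theta$, so the $\theta$-marginal of the joint chain is itself Markov, with kernel given by proposal $q(\theta'|\theta)$, acceptance $\alpha$ above, and the usual rejection atom at $\theta$. Detailed balance with $\pi(\theta|y)$ then follows from the standard identity $\pi(\theta|y)\,q(\theta'|\theta)\,\alpha(\theta,\theta')=\pi(\theta'|y)\,q(\theta|\theta')\,\alpha(\theta',\theta)$, which holds for the Metropolis acceptance function. I do not anticipate a genuine obstacle; the only point requiring care is justifying that the $\theta$-marginal is Markov at all, and this rests entirely on the cancellation above. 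Integrating out the auxiliary $x'$ is trivial once the acceptance probability is seen to be independent of it, since the draw $x'\sim\pi(x'|\theta',y)$ contributes a factor that integrates to one.
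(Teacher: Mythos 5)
Your proof is correct and follows essentially the same route as the paper's: both hinge on the factorization $\pi(x,\theta|y)=\pi(x|\theta,y)\pi(\theta|y)$, which cancels the full-conditional factors and collapses the acceptance ratio to the marginal Metropolis--Hastings ratio $\pi(\theta'|y)\,q(\theta|\theta')\big/\bigl(\pi(\theta|y)\,q(\theta'|\theta)\bigr)$. Your additional remarks---that the acceptance probability is then free of $x$ and $x'$, so the $\theta$-marginal chain is itself Markov and satisfies the standard Metropolis--Hastings detailed-balance identity---simply make explicit what the paper's proof leaves implicit.
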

\begin{proof}
\[
\pi\left(x,\theta|y\right)=\pi\left(x|\theta,y\right)\pi\left(\theta|y\right)
\]
so the Metropolis-Hastings ratio is (assuming no densities are zero)
\begin{eqnarray*}
\dfrac{\pi\left(x',\theta'|y\right)\pi\left(x|\theta,y\right)q\left(\theta|\theta'\right)}{\pi\left(x,\theta|y\right)\pi\left(x'|\theta',y\right)q\left(\theta'|\theta\right)} & = & \dfrac{\pi\left(x'|\theta',y\right)\pi\left(\theta'|y\right)\pi\left(x|\theta,y\right)q\left(\theta|\theta'\right)}{\pi\left(x|\theta,y\right)\pi\left(\theta|y\right)\pi\left(x'|\theta',y\right)q\left(\theta'|\theta\right)}\\
 & = & \dfrac{\pi\left(\theta'|y\right)q\left(\theta|\theta'\right)}{\pi\left(\theta|y\right)q\left(\theta'|\theta\right)}.
\end{eqnarray*}
\end{proof}

Thus, the chain in $\theta$ targets $\pi\left(\theta|y\right)$, as if we have been able to integrate out the
latent field. 
Because this chain makes steps in the \emph{marginal} posterior distribution for $\theta$, rather than the conditional for $\theta$ given the current $x$, it takes larger steps and the chain in $x,\theta$ converges more rapidly to the joint posterior distribution (see~\cite{PRS2007} noting that in ill-posed inverse problems the data $y$ is necessarily `weakly informative' in many dimensions of $x$). In particular, high correlation between hyperparameters and the image, that partially motivates re-parametrization of the Gibbs sampler, is irrelevant to mixing of the one-block algorithm. Figure 4.1 of~\cite{RueHeld} illustrates this. 

Evaluation of the acceptance probability in Algorithm~\ref{alg:oneblock} requires evaluating a ratio of determinants. The ratio of posterior distributions typically does not present difficulties since, as in the Jupiter example, the scaling of determinants with respect to hyperparameters is simple. However, the ratio of determinants associated with the full conditional for $x$ in the proposal presents the traditional difficulty. In work by Rue and colleagues (e.g.~\cite{RueHeld,SimpsonLindgrenRue2012}) it is assumed that sampling from a multivariate normal is performed using Cholesky factorization and hence the required determinants are available at a further cost of $n$ multiplications. However, in very large problems computing the Cholesky factorization is prohibitively expensive and in general iterative solvers of~\eqref{eq:rto} are most efficient. Then the required determinants are not directly available. In Section~\ref{sec:nonperiodic} we present an efficient method for calculating the required ratio of determinants when using iterative solvers for large problems. 

\subsection{MTC sampler\label{sec:MTCdetails}}

We implement the MTC sampler in Algorithm~\ref{alg:margcond} by performing Metropolis-Hastings MCMC sampling directly from $\pi\left(\theta|y\right)$, shown in Algorithm~\ref{alg:MH}, and only after obtaining an (effectively) independent sample $\theta \sim \pi(\theta|y)$ do we then draw $x \sim \pi(x|\theta,y)$ to get an (effectively) independent sample $(x,\theta) \sim \pi(x,\theta|y)$ from the full posterior distribution. 
\begin{algorithm}
\label{alg:MH}
at state $\theta$

draw $\theta'\sim q\left(\theta'|\theta\right)$

accept $\theta'$ w.p. $\alpha(\theta \rightarrow \theta' )=1\wedge\dfrac{\pi\left(\theta'|y\right)q\left(\theta|\theta'\right)}{\pi\left(\theta|y\right)q\left(\theta'|\theta\right)}$

otherwise reject

\caption{Metropolis-Hastings algorithm on $\pi(\theta|y)$}
\end{algorithm}
If we use the same proposal distribution $q\left(\theta'|\theta\right)$ as the one-block Algorithm~\ref{alg:oneblock}, then by Lemma~\ref{lem:oneblock} both MTC and one-block will generate the same chain over $\theta$ and hence these algorithms have the same statistical efficiency. However, MTC evaluates the Metropolis-Hastings ratio directly using $\pi\left(\theta|y\right)$, and thus avoids the cost of the linear solve required to draw from the full conditional for $x$ in each MCMC iteration. 
 
Reduced computational cost is possible when it is possible to cheaply evaluate the ratio 
\( {\pi\left(\theta'|y\right)}/{\pi\left(\theta|y\right)} \)
required in the Metropolis Hastings acceptance probability. For the general linear-Gaussian  model~\eqref{eqs:lgbm} this involves evaluating  ratios of determinants of $\Sigma^{-1}$, $Q$ and $Q+A^T\Sigma^{-1}A$ in~\eqref{eq:overlooked}, and differences of arguments of the exponential (which are also required in the one-block algorithm). Efficient calculation of these terms is developed in Sections~\ref{sec:periodic} (periodic boundary conditions)  and \ref{sec:nonperiodic} (general case). 

\section{Numerical comparisons for the periodic model\label{sec:periodic}}
In this Section we develop specific computational schemes and present numerical results for all algorithms applied to the Jupiter deblurring problem, in the simplified setting when periodic boundary conditions are assumed for the unknown image of size $p\times p$, as used for regularized inversion computed in Section~\ref{sec:reg}. This allows us to use the computational cost of regularized Fourier deconvolution as a benchmark, being a standard  efficient method for image deblurring~\cite{NumRec}. The FFT also transforms determinants into products to give, at worst, an $\mathcal{O}(n)$ calculation.

\subsection{MCMC sampling from $\pi(\theta|y)$}
We present two algorithms for the MCMC over $\theta|y$ in the MTC sampler: the first (Option 1) is a `no think' implementation using a random walk MCMC and the $\mathcal{O}(n)$ evaluation of determinants; the second (Option 2) implements Metropolis-within-Gibbs over a re-parametrization of $\theta$ that is more efficient and uses the efficient calculation of functions $f$ and $g$ detailed in the Appendix that is potentially $\mathcal{O}(1)$ in image size, provided that the number of terms $\lambda Z_i$ in the interval $[c,c^{-1}]$ is $\mathcal{O}(1)$ (see the Appendix). Then the on-line cost of sampling from $\pi(\theta|y)$ will be $\mathcal{O}(1)$ as $n \rightarrow \infty$, so the only on-line computation that depends on image size will be the single solve of~\eqref{eq:rto} that generates an independent image sample.

{\bf Option 1}. 
From current state $\theta = (\gamma,\delta)$ propose $\theta' = (\gamma',\delta')$ according to
\begin{eqnarray*}
	\gamma' | \gamma & \sim & \Normal(\gamma,w_\gamma^2) \\
	\delta' | \delta & \sim & \Normal(\delta,w_\delta^2),
\end{eqnarray*}
that defines the proposal density 
$$
	q(\theta'|\theta) \propto \exp \left( -\frac{1}{2w_\gamma^2}(\gamma'-\gamma)^2 - \frac{1}{2w_\delta^2} (\delta'-\delta)^2 \right),
$$
and proceed as in Algorithm~\ref{alg:MH}. This MCMC simplifies to the Metropolis algorithm as the proposal density function is symmetric, hence the Hastings ratio $q\left(\theta|\theta'\right)/q\left(\theta'|\theta\right)$ always equals $1$. The simple $\mathcal{O}(n)$ calculation of determinants is used. 

A useful guide is to tune  $w_\gamma$ and $w_\delta$ until the acceptance ratio is approximately $0.5$ in low dimensions and $0.25$ in high dimensions \cite{RobertsLectureNotes}.  We used $w_\gamma = 2.34\times 10^{-3}$ and $w_\delta = 17.28 \times 10^{-7}$, which correspond to approximately $1.8$ times the standard deviation of $\gamma | y$ and $\delta |y$, respectively.

{\bf Option 2.}  This option uses a more efficient technique for evaluating $f$ and $g$ and a Metropolis-within-Gibbs algorithm with bespoke Gibbs directions to obtain a near-optimal implementation of MTC.  Instead of sampling from $\delta | \gamma,y$ and $\gamma | \delta,y$ we use the polar coordinates $\phi = \tan^{-1}(\delta / \gamma)$ and $r = \sqrt{\delta^2 + \gamma^2}$ and sample from $r | \phi,y$ and $\phi | r,y$.  We do this because we are able to directly draw independent samples from $r | \phi,y$ since 
$$
	r | \phi,y \sim \Gamma\left( \frac{n}{2} + \alpha_\gamma + \alpha_\delta, \frac{\cos \phi}{2}f(\tan \phi) + \beta_\gamma \cos \phi + \beta_\delta \sin \phi \right).
$$
We then use one iteration of a Metropolis algorithm to sample from $\phi | r,y$ that has density function
\begin{flalign*}
  &\pi(\phi|r,y) \propto \pi(\delta,\gamma|y) \propto & \\
  & (\cos\phi)^{\alpha_\gamma-1} (\sin\phi)^{n/2 + \alpha_\delta-1} \exp \left( -\frac{1}{2} g(\tan\phi) - \frac{r \cos \phi}{2} f(\tan\phi) - \beta_\gamma r \cos\phi - \beta_\delta r \sin \phi \right),&
\end{flalign*}
using the symmetric proposal
$$
	q(\phi'|\phi) = N(\phi';\phi,w_2^2)
$$
with $w_2 = 10^{-5}$ chosen so that the acceptance rate is approximately $0.44$, being the optimum acceptance rate in one-dimension when the target is Gaussian \cite{RobertsLectureNotes}. This defines the stochastic iteration for sampling $\theta|y$ in Algorithm~\ref{alg:MwG}.  

\begin{algorithm}
\label{alg:MwG}
at state $\theta$

draw $r | \phi,y \sim \Gamma\left( \frac{n}{2} + \alpha_\gamma + \alpha_\delta, \frac{\cos \phi}{2}f(\tan \phi) + \beta_\gamma \cos \phi + \beta_\delta \sin \phi \right)$

draw $\phi' \sim q(\phi'|\phi)$

accept $\phi'$ w.p. $\alpha(\phi \rightarrow \phi' )=1\wedge\dfrac{\pi\left(\phi'|r,y\right)q\left(\phi|\phi'\right)}{\pi\left(\phi|r,y\right)q\left(\phi'|\phi\right)}$

otherwise reject

\caption{Metropolis-within-Gibbs algorithm on $\pi(\theta|y)$ for the Jupiter example with directions $\phi = \tan^{-1} (\delta/\gamma)$ and $r = \sqrt{\delta^2 + \gamma^2}$}
\end{algorithm}
Evaluation of the ratio of marginal densities uses the series expansion of functions $f$ and $g$ detailed in the Appendix.

\subsection{Numerical results\label{sec:sim}}
To compare computational efficiencies we implemented the four options for sample-based inference, described above, applied to the same posterior distribution, that is, using the conjugate distributions over the hyperparameters given by~\eqref{eq:condl} and \eqref{eq:condd}. We initialized each Markov chain at $\gamma = 0.218$ and $\delta = 5.15 \times 10^{-5}$ which is the mode of the marginal distribution function $\pi(\theta|y)$ over $\theta$, found using MATLAB's {\tt fminsearch} function. For the one block algorithm we also used initial $x = (\gamma A^T A + \delta L)^{-1} \gamma A^T y$ and the same proposal as MTC Option 1.  

In order to evaluate accurate statistics we computed chains of length $10000$, then, by inspection, discarded a burn in of length $60$ for block Gibbs and $20$ for the other algorithms.  All computation was performed in MATLAB R2012b using a Lenovo X230 laptop with an Intel CORE i5 processor.
We used the MATLAB's {\tt fft2} function to diagonalize the action of $A$ and $L$, and the algorithms were implemented in the transform domain to reduce the number of FFTs required.

Figure \ref{fig:samp-hist-jupiter} shows the image component of a single sample from the posterior distribution, and marginal histograms for the hyperparameters $\gamma$, $\delta$ and the effective regularization parameter $\delta/\gamma$, using the MTC Option 2 algorithm. 
Note that all algorithms we use are provably convergent with the same distributional limit, so the particular algorithm we used is not actually important.

A single posterior sample provides \emph{unbiased} estimates of any property of the posterior image, in contrast to the regularized inverse that necessarily produces biased estimates. In this sense, a single posterior image sample could be viewed as superior to the regularized solution. The posterior mean image is more usually thought of as the Bayesian counterpart to the regularized inverse; we present a mean image in Section~\ref{sec:nonperiodic}.

\begin{figure}
\includegraphics[width=0.45\textwidth]{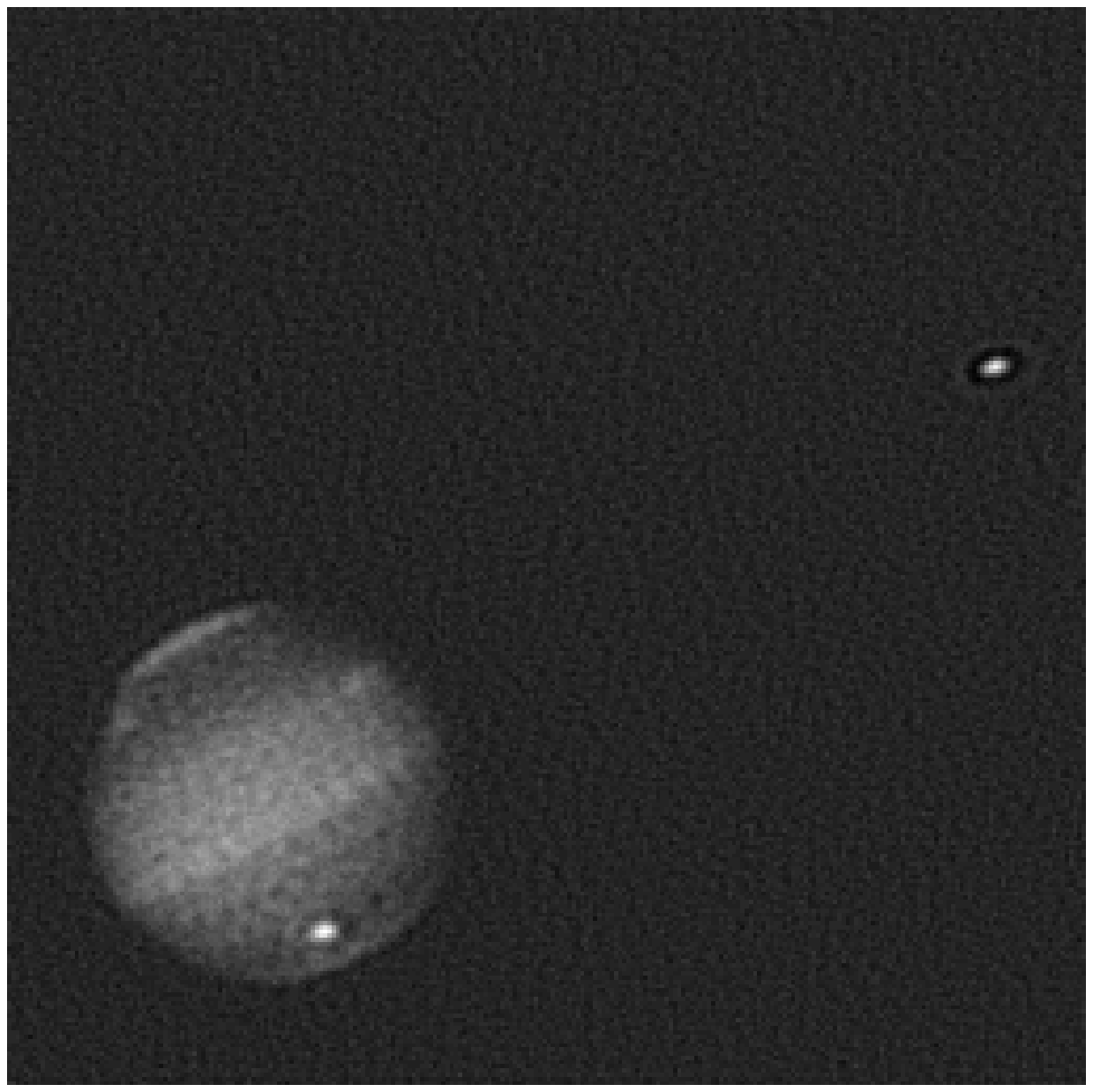}
\includegraphics[width=0.54\textwidth]{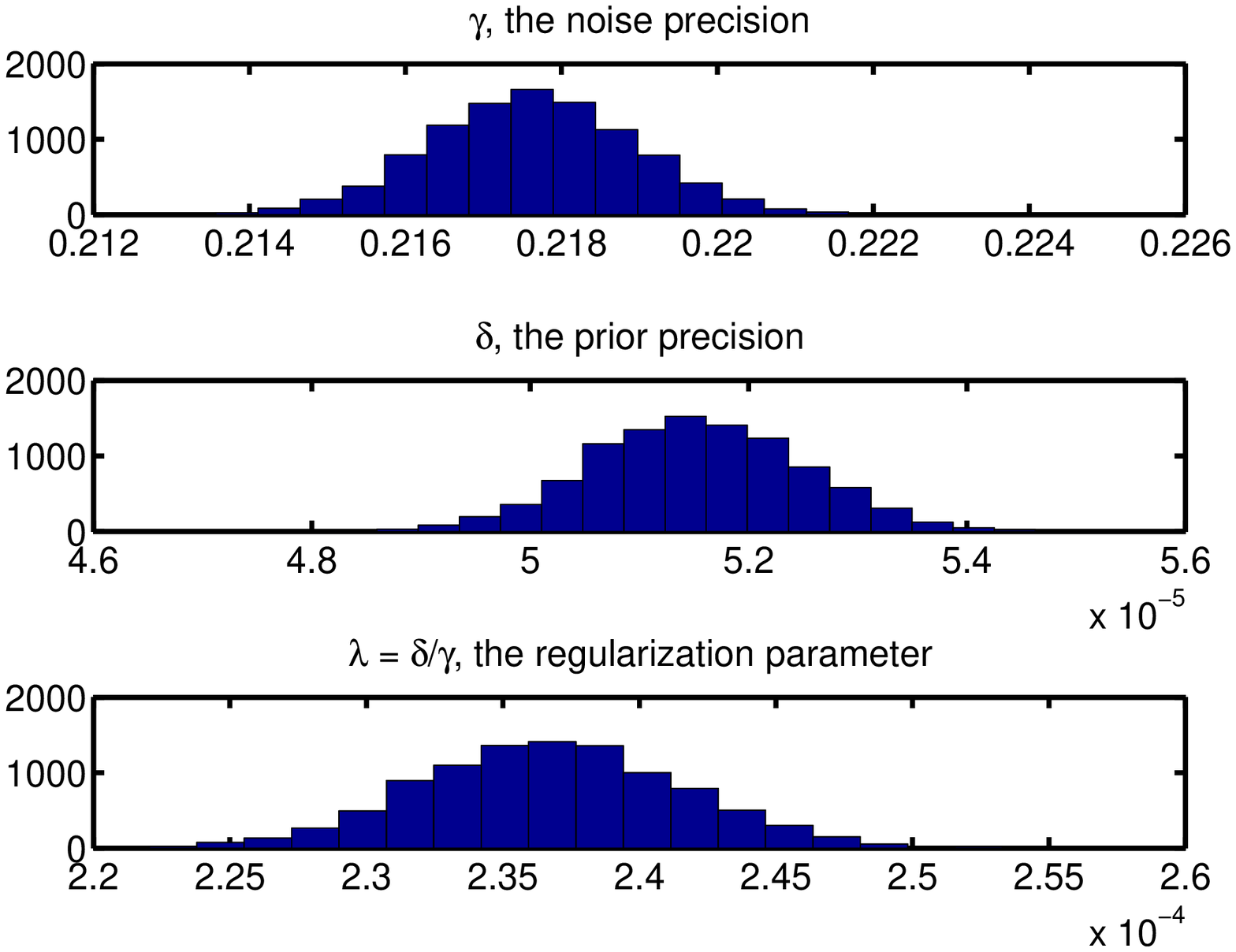}
\caption{Left; image component of one posterior sample, calculated using the MTC Option 2.  Right; marginal posterior histograms for $\gamma$, $\delta$ and $\delta/\gamma$.}
\label{fig:samp-hist-jupiter}
\end{figure}

We also calculated sample-based values of $\|Ax-y\|$ and $\sqrt{x^T L x}$ and plotted these as crosses in Figure~\ref{fig:deconv-jupiter} to indicate the posterior distribution of these statistics. Note that the crosses are tightly clustered and lie away from the L-curve;  no regularized estimate provides a good posterior summary of these statistics\footnote{This observation is not surprising when one notes that the regularized estimate in~\eqref{eq:varreg} may be written $\hat{x}_\lambda = \arg \min_{x} x^T L x$ subject to $\lVert Ax - y \rVert^2 =c $ for various $c>0$. Hence \emph{all} posterior samples lie \emph{above} the L-curve. In the presence of noise one therefore expects the minimizer to display stochastic bias and to be an outlier in the posterior distribution for these statistics. We can conclude that optimization does not provide valid summary statistics of the Bayesian posterior distribution.}. The posterior distribution over effective regularization parameter $\delta/\gamma$ has a sharp peak at $2.37 \times 10^{-4}$, which differs significantly from the value suggested by the L-curve method.

\subsection{Computational efficiency}
We measure computational efficiency of algorithms by evaluating the \emph{computing cost per effective sample} (CCES), defined as~\cite{GMI}
$$
\mathrm{CCES} = \frac{ \tau T}{N}
$$
where $\tau$ is the integrated autocorrelation time (IACT) for the statistic of interest, $T$ is the total (on-line) compute time, and $N$ is the length of chain ($10^4$ in our case).  We use the definition
\begin{equation}
   \tau = 1 + 2 \sum_{k=1}^\infty \rho_k, \label{eq:IACT}
\end{equation}
where $\rho_k$ is the autocorrelation at lag $k$, which gives the length of the chain that has the same variance reducing power as one independent sample.  Thus, CCES is the compute time required to reduce variance in estimates by the same amount as one independent sample; smaller is better. 
We estimate $\tau$ using \emph{twice}\footnote{Two definitions of IACT are used in the literature, one being twice the other.  Physics literature~\cite{Sokal96} tends to define IACT as $\frac{1}{2}\tau$, whereas statistics literature~\cite{Geyer} uses the definition in~\eqref{eq:IACT}.} the value computed by Wolff's {\tt UWerr.m} code~\cite{Wolff} for each of $\gamma$, $\delta$, and $\lambda=\delta/\gamma$. Autocorrelation functions for $\lambda$ are shown in Figure~\ref{fig:autocorrelation}. 
\begin{figure}
  \begin{center}
    \includegraphics[scale=0.5]{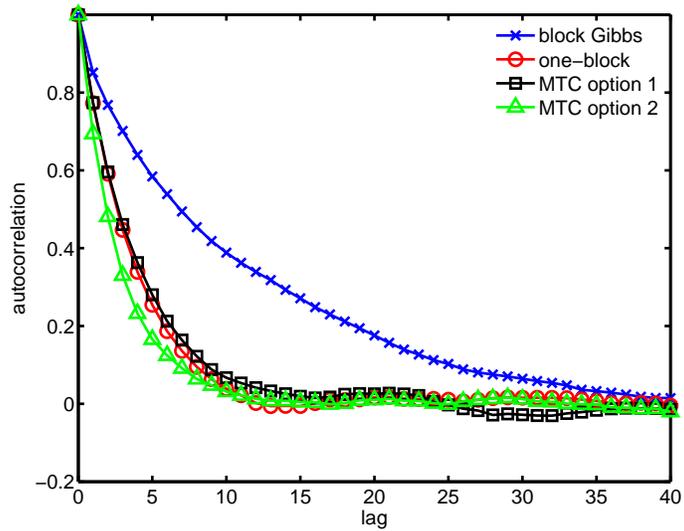}
  \end{center}
  \caption{ Autocorrelation of $\lambda = \delta/\gamma$ for all four sampling algorithms.}
\label{fig:autocorrelation}

\end{figure}

\begin{table}
\caption{Compute times and CCES (in seconds) and acceptance rates.}
\label{tab:1}
\begin{center}
\begin{tabular}{|p{3cm}||c|c|c|c|}
\hline
 & burn in & total time & acceptance rate & CCES for $\lambda$ \\
\hline \hline
Block Gibbs & 60 & 83.1 & 1 & 0.17 \\
One block & 20 & 127.8 & 0.33 & 0.090 \\
MTC Option 1 & 20 & 63.1 & 0.33 & 0.050 \\
MTC Option 2 & 20 & 11.8 & 0.46 & 0.015 \\
\hline
\end{tabular}
\end{center}
\end{table}

\begin{table}
\caption{Integrated autocorrelation times (in iterations) for three statistics of interest.}
\label{tab:2}
\begin{center}
\begin{tabular}{|p{3cm}||c|c|c|}
\hline
 & $\gamma$ & $\delta$ & $\lambda=\delta / \gamma$ \\
\hline \hline
Block Gibbs & 1.6 & 22.3 & 21.0 \\
One block & 7.8 & 6.7 & 7.1 \\
MTC Option 1 & 7.6 & 7.8 & 7.9 \\
MTC Option 2 & 2.1 & 5.0 & 5.7 \\
\hline
\end{tabular}
\end{center}
\end{table}

Total compute times, CCES, and IACT are shown for each sampling algorithm in Tables~\ref{tab:1} and \ref{tab:2}. We see that statistical efficiency (IACT) and computational efficiency (CCES) follow the pattern expected from theoretical considerations, with MTC Option 2 being clearly the most efficient in both measures.  Option 2 improves on Option 1 in IACT because of the improved MCMC for $\theta|y$ while compute time per iteration is also reduced due to the efficient evaluation of $f(\lambda)$ and $g(\lambda)$.


In Table \ref{tab:2},  block Gibbs has IACT for $\delta$  much larger than for $\lambda$, whereas the other algorithms have similar IACT for $\delta$ and $\lambda$.  This suggests that high correlations between $x$, $\gamma$, and $\delta$ reduce the efficiency of block Gibbs. Results for the block Gibbs sampler are shown without re-parametrization that could potentially improve its statistical efficiency. Looking at the IACT for $\lambda$ in Table~\ref{tab:2} one would expect that better parametrization could reduce the IACT of block Gibbs in this example to that of MTC Option 2, since that algorithm is not affected by correlations between $x$ and $\theta$, and the re-parametrization of $\theta$ gives efficient sampling. That is, statistical efficiency  of block Gibbs could be improved by a little more than factor of 3. Since the re-parametrization of Gibbs increases the cost per iteration by a factor of about 3, re-parametrization would only slightly reduce the CCES for block Gibbs. This better parametrization of the hyperparameters could also be applied to the one-block algorithm to reduce its IACT by a small amount. Thus, we expect that after a re-parametrization of \emph{both} algorithms, the one-block algorithm will remain roughly twice as computationally efficient as block Gibbs sampling, as shown in Table~\ref{tab:1}. This agrees with the suggestion in~\cite{RueHeld} that the one-block algorithm is to be preferred over Gibbs in most circumstances.

In Table \ref{tab:2} we also see that IACT for both the one-block algorithm and MTC Option 1 are approximately the same which agrees with theory that the algorithms are statistically equivalent. MTC Option 1 has a smaller CCES than does one-block because evaluating the acceptance ratio for MTC Option 1 has approximately half the computing cost of the one-block algorithm. 



\begin{table}
\caption{Compute time and number of solves required for a regularized image or an independent image sample.}
\label{tab:3}
\begin{center}
\begin{tabular}{|p{2.3cm}||c|c||c|c||c|}
\hline
 & time to $\lambda$ & solves to $\lambda$ & time to $x$ & solves to $x$ & total time \\
\hline \hline
Regularization & 0.52 & 200 & 0.0024 & 1 & 0.52 \\
Block Gibbs & 0.85 & 102 & 0.0020 & 0 & 0.85 \\
One block & 0.44 & 34 & 0.0020 & 0 & 0.44 \\
MTC Option 1 & 0.23 & 36 & 0.0096 & 1 & 0.24 \\
MTC Option 2 & 0.037 & 0 & 0.0082 & 1 & 0.045 \\
\hline
\end{tabular}
\end{center}
\end{table}

Finally, in Table \ref{tab:3} we compare sampling to regularization. {\emph{Time to $\lambda$} is the compute time for finding $\lambda$ by constructing an L-curve for regularization, and is 
$$
	\frac{T}{N} ( (\mathrm{burn}) + 2\tau)
$$
for sampling algorithms. \emph{Time to $x$} is any additional compute time required to produce a deconvolved image.  For regularization this is the time to solve~\eqref{eq:gendeconv}, for the one-block algorithm it is the cost of a single inverse FFT, and for the MTC algorithms it is the cost of drawing a sample from $x|y,\theta$ by solving~\eqref{eq:rto}.  

MTC Option 2 is a whole order of magnitude faster than regularization at computing a candidate for the deconvolved image of Jupiter.  Regularization spends the majority of computing time constructing an L-curve to find a suitable regularization parameter, requiring $200$ solves, and then solves the linear system~\eqref{eq:gendeconv} to estimate the deconvolved Jupiter by the MAP estimate of $x|y,\theta$.  MTC Option 2 is an order of magnitude faster at computing  $\lambda = \delta/\gamma$ as an independent draw from $\pi(\theta|y)$, and then draws an independent sample from $x|y,\theta$ by a single solve of~\eqref{eq:rto}.   

We also show the number of \emph{solves to $\lambda$}, which is $((\mathrm{burn}) + 2\tau)$ for the block Gibbs, one-block and MTC Option 1 algorithms, and the subsequent \emph{solves to $x$} for each algorithm. Since each algorithm is dominated by the cost of the linear solves, these figures give relative compute costs in the general case when linear solves are more expensive than $\mathcal{O}(n)$.

\section{MTC sampling for non-periodic models\label{sec:nonperiodic}}
We now present an implementation of MTC sampling for deblurring Jupiter using an image model that includes a 16 pixel wide band of nuisance pixels around the image region and without the simplifying assumption of periodic boundary conditions. We will see that, after an initial off-line computing phase, samples from $\pi(\theta|y)$ can be computed cheaply and independent of the image size. 


Let $x \in \mathbb{R}^n$ where $n = (256+32)^2$ be the true image augmented with a border of $16$ pixels, and impose zero Dirichl\'{e}t boundary conditions beyond that.  We compute the action of $A$, $A^T$, and $L$ directly by convolution using MATLAB's {\tt conv2} function. Since $A \in \mathbb{R}^{m \times n}$ and $L \in \mathbb{R}^{n \times n}$ are sparse matrices, operation by these matrices requires $\mathcal{O}(n)$ operations but we never assemble or factorize the matrices.

The marginal posterior distribution $\pi(\theta|y)$ is given by \eqref{eq:jupitermarg} and we use conjugate prior distributions in \eqref{eq:lprior} and \eqref{eq:dprior}, as before.  
To sample from $\pi(\theta|y)$ we implemented a Metropolis-within-Gibbs algorithm with Gibbs directions $\gamma$ and $\lambda = \delta/\gamma$.  We can directly draw independent samples of $\gamma | \lambda,y$ since
$$
	\gamma | \lambda,y \sim \Gamma \left( \frac{m}{2} + \alpha_\delta + \alpha_\gamma, \frac{1}{2}f(\lambda) + \beta_\gamma + \beta_\delta \lambda \right), 
$$
and we use the random walk Metropolis algorithm with proposal $\lambda' | \lambda \sim N(\lambda,w_3^2)$ with $w_3 = 10^{-4}$ to sample from $\lambda | \gamma,y$ which has conditional density function
$$
	\pi(\lambda | \gamma ,y) \propto \lambda^{n/2 + \alpha_\delta -1} \exp \left( -\frac{1}{2} g(\lambda) - \frac{\gamma}{2} f(\lambda) - \beta_\delta \gamma \lambda \right).
$$
Efficient implementation of this algorithm then depends on the ability to efficiently evaluate $f(\lambda)$ (for the Gibbs step), $f(\lambda')-f(\lambda)$, and $g(\lambda')-g(\lambda)$.  We found that quartic Taylor series expansions of $f$ and $g$, beyond the zeroth term, about the mode $\lambda_0=\arg\max_{\lambda}\pi(\lambda,\gamma|y)$, gave sufficiently accurate results for the present example. 

Writing $B = A^T A + \lambda_0 L$, the derivatives of $f$ are 
\begin{equation}
  f^{(r)}(\lambda_0) = (-1)^{r+1} k! (A^Ty)^T (B^{-1}L)^r B^{-1} (A^Ty), \qquad r=1,2,\dotsc.
  \label{eq:fderiv}
\end{equation}
Using the identity~\cite[p.29]{gohberg2000}
\[\log( \det ( I + t F)) = \sum_{r=1}^\infty \frac{(-1)^{r+1}}{r!} \operatorname{tr}(F^r) t^r\]
the derivatives of $g$ are
\begin{equation}
  g^{(r)}(\lambda_0) = (-1)^{r+1} \operatorname{tr}( (B^{-1}L)^r ), \qquad r=1,2,\dotsc.
  \label{eq:gderiv}
\end{equation}

We evaluated Monte Carlo estimates of each trace in~\eqref{eq:gderiv} by exploiting the identity $\operatorname{tr}( (B^{-1}L)^r ) = \E[ z^T (B^{-1}L)^r z]$ where each $z_i\stackrel{\text{iid}}{\sim}\operatorname{Unif}\left(\{-1,1\}\right)$, see e.g. \cite[\S 6]{Meurant2009}. This calculation makes \emph{even} order Taylor expansion most convenient as the compute work required to evaluate an odd derivative allows the next even derivative to be evaluated for free. The accuracy of Monte Carlo estimates and the number of terms in the Taylor series may be determined so that Monte Carlo and truncation errors are smaller than the relative error inherent in performing the linear solve in finite precision, though this gives a conservative bound. 

We used MATLAB's {\tt gmres} solver function for each linear solve, restarted every $25$ iterations, with relative residual tolerance of $10^{-3}$.  A tighter tolerance resulted in very long solve times when $\lambda$ is small.  The condition number of $B$ is approximately $10^4$. 
We found that just $4$ samples of $z$ gave sufficiently small relative Monte Carlo error. 
Hence, to obtain the quartic Taylor expansions of $f$ required $3$ linear solves, and $g(\lambda)-g(\lambda_0)$ required  $4 \times 4 = 16$ linear solves.  

We used an iterative procedure to find the mode of $\pi(\gamma,\lambda|y)$, by computing quartic Taylor expansions of $f$ and $g(\lambda)-g(\lambda^{(r)})$ about $\lambda^{(r)}$, then using these expansions in place of $f$ and $g$ to obtain $\lambda^{(r+1)}=\arg\max_{\lambda}\pi(\lambda|\gamma,y)/\pi(\lambda^{(r)}|\gamma,y)$.  We terminated this algorithm when $|\lambda^{(r+1)}-\lambda^{(r)}|/|\lambda^{(r)}| < 10^{-2}$.  
Starting from $\lambda^{(0)} = 5 \times 10^{-3}$ this required $3$ iterations and $57$ linear solves to converge to $\lambda_0 = 4.4879 \times 10^{-3}$.  We tested convergence by also starting from $\lambda^{(0)} < \lambda_0$ and found the algorithm converged to the same $\lambda_0$.

After computing quartic Taylor expansions of $f$ and $g$, the cost of computing a Markov chain of length $10^4$ was negligible compared to a linear solve.  Note that block Gibbs and the one-block algorithm (not simulated here) both require a linear solve per sample, so would require $10^4$ solves to compute a Markov chain of the same length. The $74 = 57 + 3 + 16$ linear solves required before computing the Markov chain for MTC is comparable with, though smaller than, the 200 solves required to determine $\lambda$ for regularization.

Figure~\ref{fig:np-samp-hist-jupiter} shows histograms of $\gamma$, $\delta$ and $\lambda=\delta/\gamma$ values computed from the MTC sampler.  We also computed the marginal posterior mean image, shown in Figure \ref{fig:np-samp-hist-jupiter}, as follows. The posterior expectation of any function $h(x)$ may be written
\[
\text{E}_{x,\theta |y}\left[ h\left(x \right) \right] = \text{E}_{\theta |y}\left[ \text{E}_{x |\theta,y}\left[ h\left(x \right) \right] \right]
\] 
which is a weighted sum in $\theta$ of expectations over full conditionals in $x$.  This allows efficient calculation when the inner expectation $\text{E}_{x |\theta,y}\left[ h\left(x \right) \right]$ is cheap to evaluate; the outer expectation is low dimensional and may be easily \emph{estimated} via a Monte Carlo integral utilizing samples from the MCMC over $\theta |y$, or \emph{evaluated} via numerical integration once the marginal posterior distribution over $\theta$ is well determined. We implemented the latter. In the linear Gaussian problem the full conditional for $x$ is Gaussian so any moment may be evaluated this way, i.e. for polynomial $h$. 

The mean in the Jupiter example further simplifies to
\begin{equation}
\label{eq:mean}
	\E[x|y] = \int (A^T A + \lambda L)^{-1} A^T y \, \pi(\lambda|y) \, \dd \lambda
\end{equation}
with weights for the numerical integration given by the marginal posterior histogram for $\lambda$.  This requires as many linear solves as there are bins required for the histogram for $\lambda$. Note that the integration in~\eqref{eq:mean} requires the solve in~\eqref{eq:gendeconv} rather than the solve in~\eqref{eq:rto} that also requires drawing random numbers. 

\begin{figure}
\includegraphics[width=0.43\textwidth]{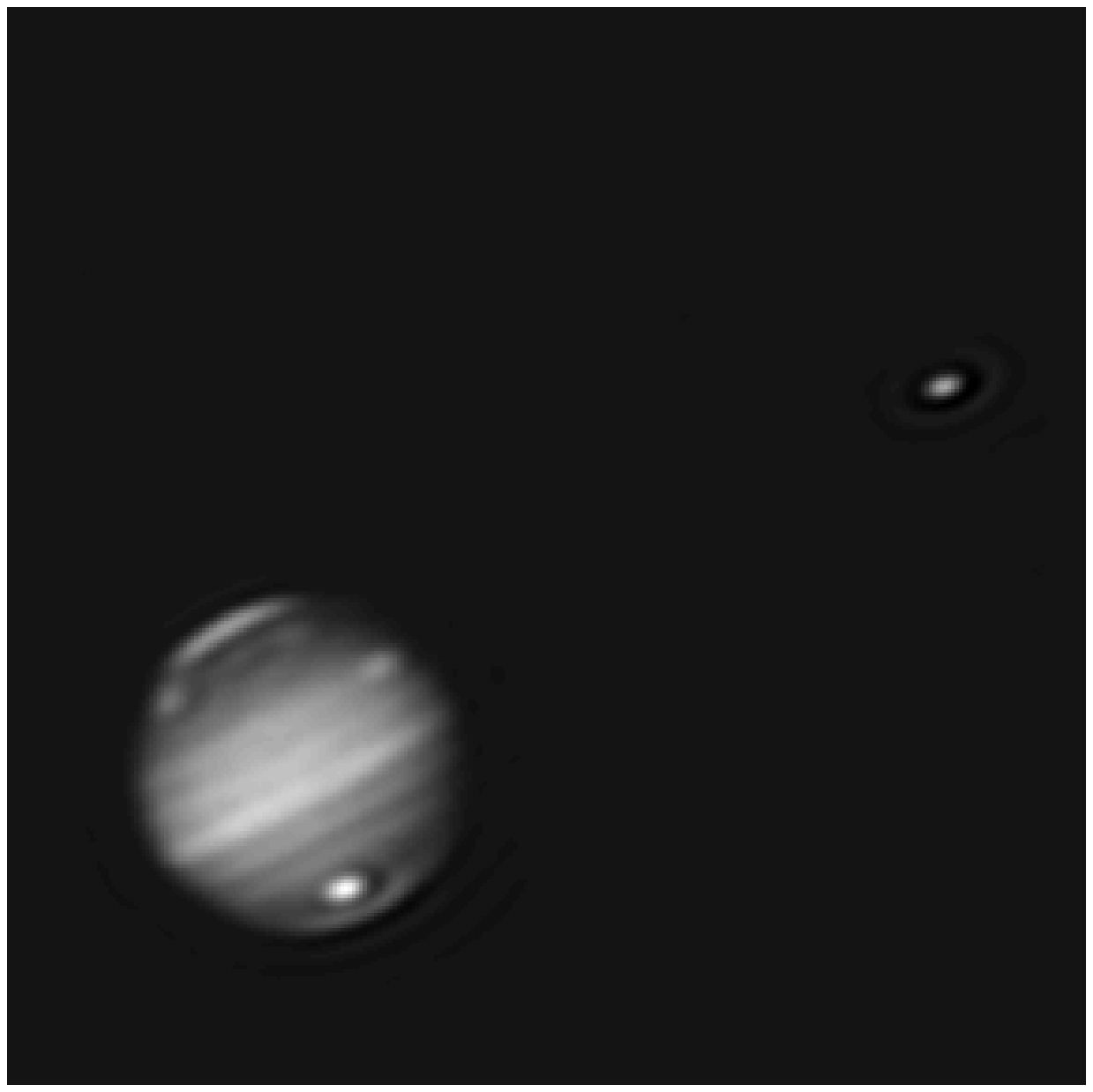}
\includegraphics[width=0.56\textwidth]{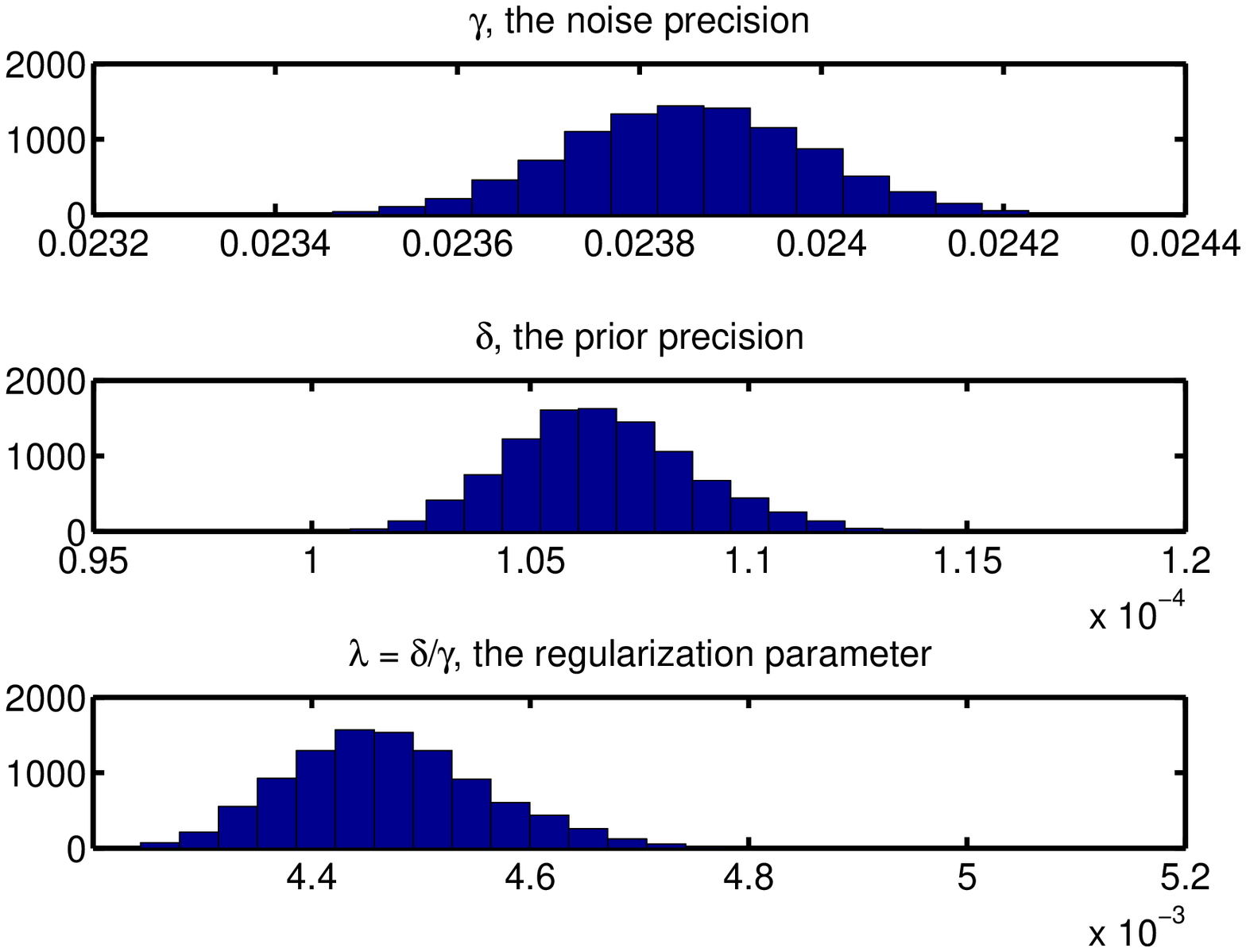}
\caption{Left; mean of posterior image with non-periodic boundary conditions.  Right; marginal posterior histograms for $\gamma$, $\delta$ and $\delta/\gamma$.}
\label{fig:np-samp-hist-jupiter}
\end{figure}

It is interesting to note that the posterior mean image shown in Figure \ref{fig:np-samp-hist-jupiter} gives a better deblurred image than the regularized solution in Figure~\ref{fig:deconv-jupiter}, with the bands of Jupiter, and other details, being more clearly defined. This improvement may be the result of better modeling of the image boundary employed in this example. Artificial ringing around the satellite remains evident and, as for regularization, reduction of this can be achieved by better modeling of the point-spread function. The marginal posterior variance of images may be evaluated to provide valid uncertainties, as indicated above, though we have not made that calculation for this example.

The posterior histogram for $\lambda$ in Figure~\ref{fig:np-samp-hist-jupiter} shows that $\lambda$ is effectively supported on $|\lambda-\lambda_0| < 3 \times 10^{-4}$. Since $f^{(5)}(\lambda_0) = 5.04 \times 10^{16}$ and $g^{(5)}(\lambda_0) = 4.2 \times 10^{16}$, the truncation error for $f$ and $g$ is approximately $3 \times 10^{-3}$.  
We also evaluated the Monte Carlo error in estimates of the trace for each derivative of $g$.  
This was $9.7 \times 10^{3}$ for $g'(\lambda_0)$ and $2.4 \times 10^6$ for $g''(\lambda_0)$.  Hence, the error in the linear term for $g(\lambda)-g(\lambda_0)$ is bounded by approximately 
$3$ and the error in the quadratic term is approximately 
$10^{-1}$.  The errors in the cubic and quartic terms are $10^{-3}$ and $10^{-5}$, respectively. Small relative errors in the linear terms correspond to a small relative scaling in the hyperparameters, which does not affect inference in $x$ when hyperpriors are uninformative in this scale near $\lambda_0$. Thus the dominant error is the quadratic term of $10^{-1}$, which corresponds to a small relative error in the \emph{variance} of the marginal posterior distributions over $\delta$ and $\gamma$. Using multiple centers in a piecewise expansion provides a simple mechanism for reducing the effect of errors to any desired level. 



\section{Discussion\label{sec:discussion}}
We considered posterior sampling for a canonical linear inverse problem with $n=65536$ and $n=82944$ unknowns using a Bayesian hierarchical model with Gaussian likelihood and GMRF prior. In the computed  example we used a conjugate hyperprior distribution  to enable block Gibbs sampling. By using the observation in~\cite{BardsleyRTO} that sampling from the full conditional distribution over the latent field may be performed by a regularized solve (plus samples from standard normals), we focused effort on sampling from the marginal posterior distribution over hyperparameters. Our main contribution is to show that an MCMC over that marginal distribution can be cheap and fast, by reducing the computation within iterations to the evaluation of two smooth functions of a single variable, to give an algorithm that we call marginal then conditional (MTC) sampling.

In the first computed example we compared the computational cost of the MTC sampler with the cost of regularized Fourier inversion and also a range of MCMC sampling algorithms that have been applied to the linear-Gaussian problem; the MTC sampler outperforms all these algorithms in the sense that the compute time required to generate an independent sample from the posterior is less than required for the alternative MCMC algorithms, and also an order of magnitude less than the cost of regularized inversion. The latter result is, perhaps, the most surprising as MCMC is often viewed as necessarily slower than deterministic algorithms. However, that is clearly not the case, as we have shown. We also computed a second example to demonstrate how  MTC sampling may be implemented in the more general setting, where $A^TA$ and $L$ are not simultaneously diagonalizable using the FFT. By counting linear solves that dominate computational cost, we saw that the cost of calculating the posterior mean image was still less than the cost of the regularized inverse, by about a factor of 2 in the second computed example.

Posterior statistics that we evaluated differed significantly from the values given by the regularized solution. Figure~\ref{fig:deconv-jupiter} shows that the posterior distribution for $\sqrt{x^TLx}$ and data misfit $\|Ax-y\|$ is supported well away form the L-curve, so \emph{no} value of regularizing parameter gives a meaningful summary of these statistics. The same conclusion holds for $\lambda=\delta/\gamma$.  This demonstrates that analyses of inverse problems that purport to be `Bayesian' but then calculate solutions by optimization, and hence actually implement regularization, are evaluating summary statistics that may not even be in the support of the posterior distribution. 

The  decomposition used in the MTC sampler, in Algorithm~\ref{alg:margcond}, is widely known in statistics as the \emph{marginal algorithm}. An example is the `linchpin variable sampler' presented in~\cite{AcostaHuberJones}; that algorithm was defined with a draw from the full conditional for $x$ per iteration of an MCMC on the marginal posterior over hyperameters. The MTC sampler differs in that the MCMC over hyperparameters is run until an effectively independent sample is generated, and only then is a sample generated from the full conditional for $x$. Agapiou \emph{et al.}~\cite{NotBardsleyStuart} also implemented the marginal algorithm, sampling from the marginal posterior over the hyperparameter $\delta$, with $\gamma$ assumed known. They viewed the marginal algorithm as the ``gold standard'' finding, as we do here, that it is optimal for statistical efficiency, though  also referred to it as ``prohibitively expensive for large scale inverse problems''. We have been able to improve on that situation by implementing efficient computation within the MTC algorithm that enables inference over \emph{all} hyperparameters and that outperforms other sampling algorithms in both statistical and computational efficiency, especially for large image size. Indeed, we think that the computational scheme in MTC may be close to optimal as the computational cost per independent sample is dominated, for large $n$, by the cost of a single linear solve. 

Agapiou \emph{et al.}~\cite{NotBardsleyStuart} restricted their consideration to Gaussian prior models with trace class covariance in the infinite-dimensional limit, and finite-dimensional discretizations. 
We followed~\cite{Vogel,BardsleyRTO} by using a scaled graph Laplacian precision in two dimensions. This does not generate a trace class covariance, which shows up as the log singularity in fundamental solutions of the Laplacian in two dimensions. Thus the effective prior covariance function that we use has a logarithmic singularity, which seems undesirable from a modeling perspective. The singularity also leads to mesh dependence in solutions; this problem is not so evident when using a structured mesh, as in our examples, since truncation of the singular covariance is roughly the same across the mesh and so the variance is roughly stationary across the mesh. However, when using unstructured meshes the variance typically varies dramatically across the mesh, and so is unsuitable for applications. For these reasons, we also advocate restricting models to trace class covariances.

Efficient computation in MTC relied on being able to evaluate the univariate functions $f$ and $g$ in~\eqref{eq:f} and \eqref{eq:g}. When the functions $f$ and $g$ are well defined in the limit $n\rightarrow\infty$, pre-evaluation of these functions allows sampling independent $\theta|y$ in the $\infty$-dimensional case, with a single linear solve required to generate an independent posterior sample $x,\theta|y$, using, for example, an iterative solver directly on function space~\cite{nevanlinna}. We have presented computational schemes that are sufficient to perform efficient calculation in the examples provided. However, we expect that these computational schemes can be significantly improved upon, particularly for large $n$ and in other applications, and see this as a potentially fruitful topic for future research in computational UQ.

\section*{Acknowledgements}
The authors are grateful to J. Andr\'{e}s Christen for insightful comments, and to H\"avard Rue for his 2008 case study of Tokyo rainfall data that motivated this work. This work was supported by Marsden contract UOO1015.

\appendix
\section{Expansions of $f(\lambda)$ and $g(\lambda)$ in terms of $\lambda$}

For selected tolerance $\epsilon > 0$ and $s \in \mathbb{N}$ choose $c_f,c_g \in (0,1)$ such that 
$c_f^{s+1} \nrm{y}^2 = \epsilon$ and $c_g^{s+1} n = \epsilon$.  In practice we should tune $c_f$, $c_g$, and $s$ to gain efficiency.

Let $\hat{y} := \operatorname{DFT}(y)$, and let $\hat{A}$ and $\hat{L}$ be the vectors associated with using FFTs to evaluate $A$ and $L$ times a vector (they are the diagonal of the diagonalised matrices).  Define $Z_i := \hat{L}_i / |\hat{A}_i|^2$ for all $i$.  Then using Parseval's theorem we obtain
$$
	f(\lambda) = \frac{1}{n} \sum_{i=1}^n |\hat{y}_i|^2 \frac{\lambda Z_i}{1+\lambda Z_i}.
$$
Since the determinant of a matrix is the product of its eigenvalues we also obtain
$$
	g(\lambda) = \sum_{i=1}^n \log|\hat{A}_i|^2 + \sum_{i=1}^n \log(1+\lambda Z_i).
$$
Reorder the indices so that $\{Z_i\}_{i=1}^n$ is increasing and in $\mathcal{O}(n)$ operations, using cumulative sums, precompute a $s \times n$ matrix $S$, a $(s+1) \times n$ matrix $T$, and $s \times n$ matrices $U$ and $V$ with entries
$$
	S_{rq} = \sum_{j=1}^q \frac{|\hat{y}_j|^2}{n} Z_j^{r}, \quad
	T_{rq} = \sum_{j=q}^n \frac{|\hat{y}_j|^2}{n} Z_j^{-r}, \quad
	U_{rq} = \sum_{j=1}^q Z_j^{r}, \quad
	V_{rq} = \sum_{j=q}^n Z_j^{-r}.
$$
Also precompute scalar $a$ and $n$-vector $b$ with entries
$$
	a = \sum_{i=1}^n \log |\hat{A}_i|^2, \qquad
	b_s = \sum_{j=s}^n \log Z_i \qquad \mbox{for $s=1,\dotsc,n$.}
$$
The on-line calculation of $f(\lambda)$ and $g(\lambda)$ are described by the following lemmas.

\begin{lemma}
For each evaluation of $f(\lambda)$ define $m_1$ and $m_2$ such that
$$
	\lambda Z_i < c_f \mbox{ for $i \leq m_1$} \qquad \mbox{and} \qquad
	\lambda Z_i > c_f^{-1} \mbox{ for $i \geq m_2$.}
$$
Then for some $E$ satisfying $|E| \leq \epsilon$ we have
$$
	f(\lambda) = \sum_{i=m_1+1}^{m_2-1} \frac{|\hat{y}_i|^2}{n} \frac{\lambda Z_i}{1+\lambda Z_i}
	 + \sum_{r=1}^s (-1)^{r+1} \lambda^r S_{r m_1}
	 + \sum_{r=0}^s (-1)^{r} \lambda^{-r} T_{r m_2}
	 + E.
$$
\end{lemma}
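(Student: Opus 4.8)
The plan is to split the sum defining $f(\lambda)$ into three blocks of indices according to the size of $\lambda Z_i$, treat the two extreme blocks by truncated power series, and carry the moderate block exactly. Because the $Z_i$ have been reordered to be increasing, the set $\{i : \lambda Z_i < c_f\}$ is exactly the prefix $\{1,\dots,m_1\}$ and the set $\{i : \lambda Z_i > c_f^{-1}\}$ is exactly the suffix $\{m_2,\dots,n\}$; this contiguity is precisely what allows the cumulative sums $S_{r m_1}$ and $T_{r m_2}$ to pick out the correct partial sums.

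On the small block ($i \le m_1$) I would set $x = \lambda Z_i \in (0,c_f)$ and use
\[ \frac{x}{1+x} = \sum_{r=1}^s (-1)^{r+1} x^r + (-1)^s \frac{x^{s+1}}{1+x}. \]
Substituting $x = \lambda Z_i$, summing against the weights $|\hat y_i|^2/n$, and interchanging the two finite sums turns the polynomial part into $\sum_{r=1}^s (-1)^{r+1}\lambda^r S_{r m_1}$. Symmetrically, on the large block ($i \ge m_2$) I would write the summand as $\frac{1}{1+x^{-1}}$ with $x^{-1}\in(0,c_f)$ and expand
\[ \frac{x}{1+x} = \sum_{r=0}^s (-1)^{r} x^{-r} + (-1)^{s+1}\frac{x^{-(s+1)}}{1+x^{-1}}, \]
so the polynomial part becomes $\sum_{r=0}^s (-1)^r \lambda^{-r} T_{r m_2}$. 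The moderate block $m_1 < i < m_2$ is left untouched and matches the leading sum in the statement.

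The error term $E$ then collects the two truncation remainders, one from each extreme block. The estimate follows by bounding these remainders \emph{uniformly} in $i$: on the small block $\frac{(\lambda Z_i)^{s+1}}{1+\lambda Z_i} \le (\lambda Z_i)^{s+1} < c_f^{s+1}$, while on the large block $\frac{(\lambda Z_i)^{-(s+1)}}{1+(\lambda Z_i)^{-1}} \le (\lambda Z_i)^{-(s+1)} < c_f^{s+1}$, both being immediate from the thresholds defining $m_1$ and $m_2$. Factoring out $c_f^{s+1}$ and enlarging the remaining weighted sum to run over all indices gives $|E| \le c_f^{s+1}\,\frac1n\sum_{i=1}^n |\hat y_i|^2 = c_f^{s+1}\nrm{y}^2$ by Parseval (the same identity already used to derive the formula for $f$), which equals $\epsilon$ by the choice $c_f^{s+1}\nrm{y}^2 = \epsilon$.

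The only genuinely delicate point is securing the tail bound \emph{uniformly} across each block so that it survives summation against the weights: the geometric (alternating) remainder must be controlled by $c_f^{s+1}$ for every index simultaneously, which is exactly why the cut-offs sit at $\lambda Z_i = c_f$ and $\lambda Z_i = c_f^{-1}$ rather than at the radius of convergence $1$. Everything else — the two expansions, the finite index interchange, and the Parseval step — is routine.
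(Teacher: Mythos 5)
Your proof is correct and follows essentially the same route as the paper's: split the sum at the thresholds $\lambda Z_i = c_f$ and $\lambda Z_i = c_f^{-1}$, expand the small and large blocks by truncated geometric series in $\lambda Z_i$ and $(\lambda Z_i)^{-1}$ respectively (yielding the $S_{r m_1}$ and $T_{r m_2}$ partial sums), keep the middle block exact, and bound the two remainders by $c_f^{s+1}$ times the weighted sums, which Parseval and the choice $c_f^{s+1}\nrm{y}^2=\epsilon$ turn into $|E|\leq\epsilon$. Your additional remark that the monotone reordering of the $Z_i$ makes the blocks contiguous (so the precomputed cumulative sums apply) is left implicit in the paper but is the same underlying argument.
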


\begin{proof}
We first split $f(\lambda)$ into three terms, $f(\lambda) = \mathcal{T}_1 + \mathcal{T}_2 + \mathcal{T}_3$.
The partial geometric series $\frac{1}{1+z} = (1-z+z^2 - \dotsc + (-1)^s z^{s-1}) + \frac{(-1)^{s+1} z^{s}}{1+z}$ for $|z|<1$ implies
$$
	\mathcal{T}_1 
	= \sum_{i=1}^{m_1} \frac{|\hat{y}_i|^2}{n} \frac{\lambda Z_i}{1+\lambda Z_i} 
	= \sum_{r=1}^s (-1)^{r+1} \lambda^r S_{r m_1} + E_1
$$
where $|E_1| \leq c_f^{s+1} \sum_{i=1}^{m_1} |\hat{y}_i|^2/n$ since $\lambda Z_i \leq c_f$ for $i\leq m_1$.

Similarly, for $\mathcal{T}_3$ we use the partial geometric series $\frac{1}{1 + z^{-1}} =  (1-z^{-1}+z^{-2}- \dotsc + (-1)^{s}z^{-s}) + \frac{(-1)^{s+1} z^{-(s+1)}}{1 + z^{-1}}$ for $|z|>1$ to obtain
$$
	\mathcal{T}_3
	= \sum_{i=m_2}^n \frac{|\hat{y}_i|^2}{n} \frac{1}{1+(\lambda Z_i)^{-1}} 
	= \sum_{r=0}^s (-1)^r \lambda^{r} T_{r m_2} + E_3
$$
where $|E_3| \leq c_f^{s+1} \sum_{i=m_2}^n |\hat{y}_i|^2/n$ since $(\lambda Z_i)^{-1} \leq c_f$ for $i \geq m_2$.  Hence result, noting that $E = E_1 + E_3$ satisfies $|E| \leq  c_f^{s+1} \sum_{i=1}^n |\hat{y}_i|^2/n = c_g^{s+1} \nrm{y}^2$.
\end{proof}

\begin{lemma}
For each evaluation of $g(\lambda)$, define $m_1$ and $m_2$ such that
$$
	\lambda Z_i < c_g \mbox{ for $i \leq m_1$} \qquad \mbox{and} \qquad
	\lambda Z_i > c_g^{-1} \mbox{ for $i \geq m_2$.}
$$
Then for some $E$ satisfying $|E| \leq \epsilon$ we have
\begin{multline*}
	g(\lambda) = a + b_{m_2} + (n-m_2+1)\log(\lambda) + \sum_{m_1+1}^{m_2-1} \log(1+ \lambda Z_i)\\ + \sum_{r=1}^s \frac{(-1)^{r-1}}{r} \lambda^r U_{r m_1}  + \sum_{r=1}^s \frac{(-1)^{r-1}}{r} \lambda^{-r} V_{r m_2}  + E.
\end{multline*}
\end{lemma}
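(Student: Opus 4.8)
The plan is to follow the same three-region strategy used for $f$ in the preceding lemma, now applied to the Fourier-diagonalized identity $g(\lambda) = a + \sum_{i=1}^n \log(1+\lambda Z_i)$, with $a = \sum_i \log|\hat A_i|^2$ already isolated as the constant term. With the indices ordered so that $\{Z_i\}$ is increasing, the cut-offs $m_1$ and $m_2$ split the summation range into a ``small'' tail $i \le m_1$ where $\lambda Z_i < c_g < 1$, a ``large'' head $i \ge m_2$ where $\lambda Z_i > c_g^{-1} > 1$, and a middle block $m_1 < i < m_2$ that I would leave exact, producing the term $\sum_{m_1+1}^{m_2-1}\log(1+\lambda Z_i)$.

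For the small tail I would substitute the Mercator series $\log(1+z) = \sum_{r=1}^s \frac{(-1)^{r-1}}{r} z^r + R$ with $z = \lambda Z_i < 1$, interchange the finite sum over $r$ with the sum over $i$, and recognize $\sum_{i=1}^{m_1} Z_i^r = U_{rm_1}$; this yields the $\sum_{r=1}^s \frac{(-1)^{r-1}}{r}\lambda^r U_{rm_1}$ contribution. Because $Z_i \ge 0$ (the Laplacian eigenvalues $\hat L_i$ and $|\hat A_i|^2$ are non-negative), the argument $z$ is non-negative and the truncated series has an alternating tail, so the per-term remainder is bounded by the first omitted term, $(\lambda Z_i)^{s+1}/(s+1) \le c_g^{s+1}/(s+1)$.

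The large head is the step that needs genuine care, since $\log(1+\lambda Z_i)$ cannot be expanded directly when $\lambda Z_i > 1$. I would instead factor $\log(1+\lambda Z_i) = \log\lambda + \log Z_i + \log\bigl(1+(\lambda Z_i)^{-1}\bigr)$ and expand only the last term, whose argument $(\lambda Z_i)^{-1} < c_g < 1$. Summing $\log\lambda$ over $i \ge m_2$ gives $(n-m_2+1)\log\lambda$, summing $\log Z_i$ gives the precomputed $b_{m_2}$, and the expanded terms assemble into $\sum_{r=1}^s \frac{(-1)^{r-1}}{r}\lambda^{-r} V_{rm_2}$ via $\sum_{i \ge m_2} Z_i^{-r} = V_{rm_2}$; the same alternating-series estimate bounds each remainder by $c_g^{s+1}/(s+1)$.

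Finally I would collect the two families of remainders into a single error $E$ and bound it by $\frac{c_g^{s+1}}{s+1}\,(m_1 + n - m_2 + 1) \le \frac{c_g^{s+1}}{s+1}\,n \le c_g^{s+1} n = \epsilon$, using the definition of $c_g$. The main obstacle is the algebraic bookkeeping in the large-head factorization --- correctly splitting off $\log\lambda$ and $\log Z_i$, matching the $(n-m_2+1)\log\lambda$ and $b_{m_2}$ terms, and keeping the sign pattern $(-1)^{r-1}/r$ consistent between the two tails --- rather than any analytic subtlety, since the non-negativity of $Z_i$ makes the alternating-series remainder bound immediate.
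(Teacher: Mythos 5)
Your proof is correct and takes essentially the same route as the paper's: the same three-region split with the middle block kept exact, the Mercator expansion of $\log(1+\lambda Z_i)$ on $i \le m_1$ yielding the $U_{rm_1}$ terms, and the factorization $\log(1+\lambda Z_i) = \log\lambda + \log Z_i + \log\left(1+(\lambda Z_i)^{-1}\right)$ on $i \ge m_2$ yielding $(n-m_2+1)\log\lambda$, $b_{m_2}$, and the $V_{rm_2}$ terms. The only (immaterial) difference is that you bound each truncation remainder by the first omitted term of the alternating series, $c_g^{s+1}/(s+1)$, whereas the paper uses the integral form of the remainder to get the slightly looser per-term bound $c_g^{s+1}$; both give $|E| \le n\,c_g^{s+1} = \epsilon$, and your signs match the lemma statement (the paper's proof in fact has a sign typo, $(-1)^r/r$, in the $\mathcal{T}_3$ expansion).
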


\begin{proof}
We first split $g(\lambda)$ into four terms, $g(\lambda) = a + \mathcal{T}_1 + \mathcal{T}_2 + \mathcal{T}_3$.
	
Using the series expansion $\log(1+z) = \sum_{r=1}^s \frac{(-1)^{r-1}}{r} z^r + (-1)^{s} \int_0^1 \frac{t^s}{1+t} \dd t$ for any $|z|<1$ we obtain
$$
	\mathcal{T}_1 = \sum_{i=1}^{m_1} \log(1 + \lambda Z_i) 
	= \sum_{r=1}^s \frac{(-1)^{r-1}}{r} U_{r m_1} \lambda^r + E_1
$$
where
$
	|E_1| = \left| \sum_{i=1}^{m_1} (-1)^s \int_0^{\lambda Z_i} \frac{t^s}{1+t} \dd t  \right| \leq m_1 (\lambda Z_i)^{s+1} < m_1 c^{s+1}.
$

Similarly, for $\mathcal{T}_3$ we obtain
\begin{align*}
	\mathcal{T}_3 &= \sum_{i=m_2}^n \log(\lambda Z_i) + \sum_{i=m_2}^n \log\left(1 + \frac{1}{\lambda Z_i} \right) \\
	&= (n-m_2 + 1)\log \lambda + b_{m_2} + \sum_{r=1}^s  \frac{(-1)^r}{r} V_{rm_2} \lambda^{-r} + E_3
\end{align*}
where
$
	|E_3| = \left| \sum_{i=m_2}^n (-1)^s \int_0^{(\lambda Z_i)^{-1}} \frac{t^s}{1+t} \dd t \right| < (n-m_2+1) c^{s+1}.
$

Hence result, noting that $E = E_1 + E_3$, so $|E| < |E_1| + |E_3| < n c_g^{s+1}$.  
\end{proof}

\bibliographystyle{abbrv}
\bibliography{SLIP}

\end{document}